\documentclass[a4paper, 11pt]{amsart}
\usepackage{amsmath, amsthm, amscd, amssymb, amsfonts, amsxtra, amssymb, latexsym, bm}
\usepackage[shortlabels]{enumitem} 
\usepackage{multirow}
\usepackage{verbatim}
\usepackage{graphicx,epsfig,tikz}
\usepackage{float}
\usepackage{hyperref}
\hypersetup{colorlinks = true,	allcolors  = blue}

\newcommand{\N}{\mathbb{N}}
\newcommand{\ff}{\mathbb{F}}
\newcommand{\Tr}{\operatorname{Tr}}

\newcommand{\CC}{\mathcal{C}}

\newcommand{\G}{\Gamma}
\newcommand{\Div}{{\rm Div}}

\DeclareMathOperator{\lcm}{lcm} 

\newcommand{\sk}{\smallskip}
\newcommand{\msk}{\medskip}

\newtheorem{thm}{Theorem}[section]
\newtheorem{prop}[thm]{Proposition}
\newtheorem{lem}[thm]{Lemma}
\newtheorem{coro}[thm]{Corollary}

\theoremstyle{definition}
\newtheorem{rem}[thm]{Remark}
\newtheorem{exam}[thm]{Example}
\newtheorem{defi}[thm]{Definition}

\theoremstyle{remark}

\counterwithout{table}{section}

\usepackage{color}

\begin{document} \sloppy
	\numberwithin{equation}{section}
	\title[The weight enumerators of ICCs of length $2^\ell$]{The weight enumerators of irreducible \\ cyclic codes of length $2^\ell$}
	\author{Ricardo A.\@ Podest\'a, Denis E.\@ Videla}
	\dedicatory{\today}
	\keywords{Irreducible cyclic codes, generalized Paley graphs, weights, spectrum}
	\thanks{2020 {\it Mathematics Subject Classification.} 
		Primary 94B15, 05C25;\, Secondary 05C50}
	\thanks{Partially supported by CONICET, FonCyT and SECyT-UNC}
	\address{Ricardo A.\@ Podest\'a. FaMAF -- CIEM (CONICET), Universidad Nacional de C\'ordoba. 
		\newline Av.\@ Medina Allende 2144, Ciudad Universitaria, (5000), C\'ordoba, Argentina. \newline
		{\it E-mail: podesta@famaf.unc.edu.ar}}
	\address{Denis E.\@ Videla. FaMAF -- CIEM (CONICET), Universidad Nacional de C\'ordoba. 
		\newline	Av.\@ Medina Allende 2144, Ciudad Universitaria, (5000), C\'ordoba, Argentina.
		\newline {\it E-mail: devidela@famaf.unc.edu.ar}}

	\begin{abstract} 
		Using the spectral relation between irreducible cyclic codes (ICC) and generalized Paley (GP) graphs recently found by the authors, we study the weight distribution of irreducible cyclic codes of length $2^\ell$, $\ell \in \N$. These codes were previously studied by Sharma, Bakshi and Raka, who divided them into two families (each having two cases).
		First we characterize the GP-graphs spectrally related to these families of ICCs: they are complete, semiprimitive, Hamming, or Cartesian products of semiprimitive GP-graphs. 
		We then compute the spectrum of these graphs and hence, by the mentioned spectral relation we finally obtain neat explicit expressions for the weight enumerators of all the irreducible cyclic codes of length $2^\ell$ with  $\ell \in \N$.
	\end{abstract}

	\maketitle

	\section{Introduction: Preliminaries and results} \label{sec:1}
	In this work we will obtain closed explicit expressions for the weight enumerators of all $r$-ary irreducible cyclic codes of length $2^\ell$ for $\ell \in \N$.
	
	\subsection*{Motivation}
	In 2009, Sharma, Bakshi and Raka (\cite{SBR}) studied the family of $r$-ary irreducible cyclic codes of length $2^\ell$. 
	They divided these codes into four cases and studied their weight distribution. For half of the cases they give explicit expressions for the weight enumerators, while for the other half they give a somewhat complicated summation formula for the enumerators. 	
	
	On the other hand, this year we gave a spectral relation between $r$-ary irreducible cyclic codes and generalized Paley graphs (\cite{PV26b}). 
	We will obtain explicit weight enumerators for all the $r$-ary irreducible cyclic codes of length $2^\ell$ through this spectral relation. We will first characterize those GP-graphs associated to the four families given by Sharma et al., we then compute the spectrum of these graphs, and finally, using the relation, obtain the expression for the weight enumerators. 
	In this way, not only do we obtain the weight enumerators of the codes using another technique, but also we will obtain explicit expressions for the weight enumerators in all the cases, improving in this way the original result of Sharma et al.

	\subsection*{Irreducible cyclic codes and GP-graphs} 
	Now we describe the objects of study in this work (irreducible cyclic codes and generalized Paley graphs), fix some notations for the rest, and recall the main result relating the weight enumerator of the irreducible cyclic codes and the spectrum of GP-graphs.

	\subsubsection*{GP-graphs} 
	If $G$ is an abelian group and $S$ is a subset of $G$ (the connection set) not containing $0$, the associated Cayley graph $\Gamma = Cay(G,S)$ is the directed graph (digraph) with vertex set $G$ and where two vertices $u,v$ form a directed edge from $u$ to $v$ in $\Gamma$ if and only if $v-u \in S$. If $S$ is symmetric ($S=-S$), then $Cay(G,S)$ is a simple (undirected) graph. 
	Taking a finite field as the group $G$ and $k$-th powers in the field as connection set we get the GP-graphs.
	\begin{defi} \label{def: GP-graphs}
		Let $\ff_{q}$ denote any finite field of $q$ elements, say $q=p^m$ with $p$ prime and $m \in \N$. 
		The \textit{generalized Paley graph} is the Cayley graph (\textit{GP-graph} for short) 
		\begin{equation} \label{eq: Gkq}
			\G(k,q) = Cay(\ff_{q},R_{k}) \qquad \text{ with } \qquad R_{k} = \{ x^{k} : x \in \ff_{q}^*\}.
	\end{equation} 
	That is, $\G(k,q)$ is the graph with vertex set $\ff_{q}$ and two vertices $u,v \in \ff_{q}$ are neighbors (directed edge) 
	if and only if 
		$v-u=x^k$ 
	for some $x\in \ff_{q}^*$. 
\end{defi}

Let $\G=(V,E)$ be a graph with $n$ vertices, where $V$ denotes the vertex set and $E$ the edge set. The eigenvalues of $\Gamma$ are the eigenvalues $\{\lambda_i\}_{i=1}^{n}$ of its adjacency matrix $A$. The \textit{spectrum} of $\Gamma$, denoted
\begin{equation} \label{eq: spec(G)}
	Spec(\Gamma)=\{[\lambda_{i_1}]^{m_{i_1}},\cdots,[\lambda_{i_s}]^{m_{i_s}}\}
\end{equation}	
is the (multi)set of all the different eigenvalues $\{\lambda_{i_j}\}$ of $\Gamma$, counted with their multiplicities $\{m_{i_j}\}$, that is $m(\lambda_{i_j})=m_{i_j}$. 
The spectrum of $\G$ is said to be \textit{integral} if $Spec(\Gamma) \subset\mathbb{Z}$. 
We recall that if $\Gamma$ is a $k$-regular graph, then $\lambda_1 = k$. Moreover, the multiplicity of $\lambda_1$ equals the number of connected components of $G$ and hence $\Gamma$ is connected if and only if $m(\lambda_1) = 1$. 

GP-graphs have been extensively studied in the past few years. 
For instance, Lim and Praeger studied their automorphism groups and characterized all GP-graphs which are Hamming graphs (\cite{LP}, 2009). 
Also, Pearce and Praeger characterized all GP-graphs which are Cartesian decomposable (\cite{PP}, 2019); some extensions of this decomposition to the directed case can be found in \cite{PV22}. The clique number of $\G(k,q)$ was studied by Cohen and Yip (\cite{Co}, \cite{Y1}).  
Finally, the diameter of these graphs and its relation with the Waring problem over finite fields was recently approached by us in \cite{PV21} and \cite{PV22}. 
The spectrum of generalized Paley graphs was studied in \cite{PV25b} (and others), where we showed that the eigenvalues are exactly the Gaussian periods. 
In that work, we also characterized the GP-graphs with integral spectrum: they are those GP-graphs $\G(k,q)$ 
that satisfy the arithmetic condition $k\mid \frac{q-1}{p-1}$. 
In the works \cite{PV23} and \cite{PV26b}, we study the spectral relation between GP-graphs and irreducible cyclic codes.

\subsubsection*{Irreducible cyclic codes}
A \textit{linear code} of length $n$ over $\ff_r$ is a vector subspace $\CC \subseteq \ff_r^n$. If $\CC$ has dimension $h$ and minimum Hamming distance $d$, we say that it is an $[n,h,d]_r$-code. 
The weight of a codeword $c=(c_{0},\ldots,c_{n-1})$ is the number $w(c)$ of its nonzero coordinates. 
The \textit{weight distribution} of $\CC$, sometimes also called \textit{spectrum},
is the sequence $Spec(\CC) = (A_0,\ldots,A_{n})$ of frequencies 
$$ A_{i} = \#\{c\in\mathcal{C} : w(c)=i\} $$ 
of each possible weight $i=0,\ldots,n$.
We also consider the \textit{weight enumerator} which is the integral polynomial defined by: 
\begin{equation} \label{eq: W(x)}
W_{\CC}(x)=A_0 + A_1 x +\cdots+ A_{n}x^{n}.
\end{equation}

A linear code $\CC$ is \textit{cyclic} if for every $(c_{0},\ldots,c_{n-1})$ in $\CC$ the shifted codeword  
$(c_{1},\ldots,c_{n-1},c_{0})$ is also in $\CC$.  
If $\CC$ is a cyclic code of length $n$ over $\ff_{q}$, any word $(c_0,\ldots, c_{n-1})\in \CC$ has the associated polynomial
	$$ c_{0}+c_{1}x +c_{2}x^{2}+\cdots + c_{n-1}x^{n-1} \in \ff_{q}[x]/(x^n -1).$$
In this case, the cyclic code $\CC$ corresponds to an ideal of the ring $\ff_{q}[x]/(x^{n}-1)$ under this correspondence, say $\CC=\langle g(x) \rangle$. 
Thus, there exists a polynomial $h(x)$ such that 
	$$ g(x)h(x) = x^{n}-1.$$
The polynomial $g(x)$ is the generator polynomial of $\CC$ and the reciprocal polynomial $h^\perp(x)= h(x^{-1})x^{\deg h}$ of $h(x)$ is the check polynomial of $\CC$ and is a generator polynomial of the dual code $\CC^\perp$, which is also cyclic.
The cyclic code $\CC$ is called \textit{irreducible} (or \textit{minimal}) if the polynomial $h(x)$ 
is irreducible over $\ff_{q}$. In this case, if $\alpha$ is any root of $h(x)$, then $h(x)=m_{\alpha}(x)$, the minimal polynomial of $\alpha$ over $\ff_q$. 

We will be concerned with the weight distribution of the following $r$-ary cyclic codes. 

\begin{defi}
Let $p, s, m \in \N$ with $p$ prime, $s \mid m$, and consider
the prime powers 
$r=p^s$ and $q=p^m$.
Thus, we have the finite field extensions $ \ff_q/\,\ff_r$ and $\ff_r/\, \ff_p$.
For any $k\in  \N$ with $k\mid q-1$ define   
$$n=\tfrac{q-1}{k}.$$
For $k, q, r, n$ as above consider the $r$-ary cyclic code
\begin{equation} \label{eq: Ckqs}
	\mathcal{C}(k,q/r) = 
	\big \{ c_{\gamma} = \big( \Tr_{q/r}(\gamma\, \omega^{k i}) \big)_{i=0}^{n-1} : \gamma \in \ff_{q} \big \} \subset \ff_r^n
\end{equation} 
where $\omega$ is a primitive element of $\ff_{q}$ and $\Tr_{q/r}$ is the relative trace map from $\ff_{q}$ to $\ff_{r}$.
\end{defi}

From the classic theory of cyclic codes, these are irreducible cyclic codes over $\ff_r$ with zero $\alpha=\omega^{-k}$, of length $n$ and dimension $h$ equal to the order of $r$ modulo $n$, i.e.
	$$ h=ord_{n}(r) $$	
(i.e., the least integer $t$ such that $r^t \equiv 1 \! \pmod n$). 
That is, $\mathcal{C}(k,q/r)$ is an $[n,h]$ $r$-ary code. 

In this work, we will be concerned only with ICCs of length $2^\ell$ for any $\ell \in \N$. Since these codes were studied by Sharma, Bakshi and Raka we call them SBR-codes. They divided them in two kinds each having two cases. So, we will talk about SBR-codes of type I$a$, I$b$, II$a$ and II$b$ (see Section \ref{sec: GPs asociados a ICCs}).

The weight distribution of (irreducible) cyclic codes is in general a difficult task. 
There are several papers on the computation of the spectra of some of these codes using exponential sums. 
Baumert and McEliece were one of the first authors to compute the spectrum in terms of Gauss sums (\cite{BMc}, \cite{BMy}, \cite{Mc}). 
Gaussian sums and Gauss periods are related and Ding, in 2009, showed (\cite{Di1}, \cite{Di2}) that the weights of the irreducible cyclic codes $\CC(k,q/r)$ can be calculated in terms of Gaussian periods. Many of these results can be found in the Ding-Yang's survey \cite{DY}.

\subsubsection*{The spectral relation between codes and graphs}
Recently in \cite{PV26b}, we gave the following result showing that one can obtain the weight distribution of $\CC(k,q/r)$ in terms of the spectrum of certain integral connected GP-graph and conversely.

\begin{thm}[\cite{PV26b}, Theorem~4.3] \label{teo: rel Ckq Gkq}
Let $p$ be a prime and $q,r,k,n\in \N$ as in Definition~\ref{def: GP-graphs} such that $q=r^{h}$ with $h=ord_{n}(r)$.
Then, the weight distribution of the irreducible cyclic code $\CC(k,q/r)$ as in \eqref{eq: Ckqs} and the spectrum of the integral connected GP-graph $\G(k_r,q)$ as in \eqref{eq: Gkq}, where 	
\begin{equation} \label{eq: k_r}
	k_r = \gcd(k,\tfrac{q-1}{r-1}),	
\end{equation}
determine each other. 
More precisely, for any $\gamma\in \ff_{q}$, the eigenvalue $\lambda_{\gamma}$ of $\G(k_r,q)$ and the weight $w(c_\gamma)$ of $c_{\gamma} \in \CC(k,q/r)$ are related by 
\begin{equation} \label{Pesoaut}
	\lambda_{\gamma} = \frac{q-1}{k_r} -\frac{k r}{k_r(r-1)} w(c_{\gamma}), 
\end{equation}
and the multiplicity $m(\lambda_{\gamma})$ of $\lambda_{\gamma}$ is the frequency $A_{w(c_{\gamma})}$ of $w(c_\gamma)$ for all $\gamma\in \ff_q$. 
In particular, the multiplicity of $\lambda_0 = \frac{q-1}{k_r}$ is $A_0=1$. 
\end{thm}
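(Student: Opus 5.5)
The plan is to compute both sides through additive characters of $\ff_q$ and compare them directly.

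\emph{Eigenvalues.} For the Cayley graph $\G(k_r,q)=Cay(\ff_q,R_{k_r})$ on the abelian group $(\ff_q,+)$, the characters $\chi_\gamma(x)=\zeta_p^{\Tr_{q/p}(\gamma x)}$ are a complete set of eigenvectors. Hence the eigenvalues are
\[
\lambda_\gamma=\sum_{y\in R_{k_r}}\chi_\gamma(y)=\frac{1}{k_r}\sum_{x\in\ff_q^*}\zeta_p^{\Tr_{q/p}(\gamma x^{k_r})}, \qquad \gamma\in\ff_q .
\]
Counting multiplicities over $\gamma\in\ff_q$, with $\lambda_0=\frac{q-1}{k_r}$, gives the whole spectrum.

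\emph{Weights.} For the code $\CC(k,q/r)$, the standard orthogonality argument gives
\[
w(c_\gamma)=n-\#\{i:\Tr_{q/r}(\gamma\omega^{ki})=0\}=n-\frac1r\sum_{i=0}^{n-1}\sum_{a\in\ff_r}\zeta_p^{\Tr_{r/p}(a\Tr_{q/r}(\gamma\omega^{ki}))}.
\]
Separating the term $a=0$ yields
\[
w(c_\gamma)=\frac{(r-1)n}{r}-\frac1r\sum_{a\in\ff_r^*}\sum_{i}\chi_{a\gamma}(\omega^{ki}),
\]
where the inner sum runs over the subgroup $\langle\omega^k\rangle$ of index $k$. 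Next I rewrite the double sum. The elements $a\in\ff_r^*$ are exactly $\omega^{j(q-1)/(r-1)}$. So the products $a\,\omega^{ki}$ run over the subgroup generated by $\omega^k$ and $\omega^{(q-1)/(r-1)}$, which is the subgroup $\langle\omega^{\gcd(k,(q-1)/(r-1))}\rangle=R_{k_r}$. Each element is covered with constant multiplicity $\frac{(r-1)n}{(q-1)/k_r}=\frac{(r-1)k_r}{k}$. Therefore
\[
\sum_{a,i}\chi_\gamma(a\omega^{ki})=\frac{(r-1)k_r}{k}\,\lambda_\gamma,
\]
and hence
\[
w(c_\gamma)=\frac{r-1}{r}\Bigl(n-\frac{k_r}{k}\lambda_\gamma\Bigr).
\]
Solving for $\lambda_\gamma$, and using $n=\frac{q-1}{k}$, gives exactly \eqref{Pesoaut}. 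Since this is an affine bijection between weights and eigenvalues, taken over the same index set $\gamma\in\ff_q$, the frequency $A_{w(c_\gamma)}$ equals $m(\lambda_\gamma)$. This correspondence is valid provided $\gamma\mapsto c_\gamma$ is injective, i.e.\ $|\CC|=q$.

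\emph{Remaining claims.} Injectivity is where the hypothesis $q=r^h$ with $h=ord_n(r)$ enters. The code has dimension $h$ over $\ff_r$, so $|\CC|=r^h=q$, and $\gamma\mapsto c_\gamma$ is injective. In particular $c_\gamma=0$ only for $\gamma=0$, so $A_0=1$. The identity then forces $\lambda_\gamma=\frac{q-1}{k_r}$ only for $\gamma=0$, so the multiplicity of $\lambda_0$ is $1$; since the graph is regular of degree $\frac{q-1}{k_r}$, this is precisely connectedness. For integrality I use the criterion recalled from \cite{PV25b}: the spectrum is integral iff $k_r\mid\frac{q-1}{p-1}$. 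This holds because $k_r\mid\frac{q-1}{r-1}$ and $\frac{q-1}{r-1}$ divides $\frac{q-1}{p-1}$. (Integrality also follows directly from the displayed formula, since weights are integers and the eigenvalues are algebraic integers.)

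\emph{Main obstacle.} The delicate step is the subgroup-counting identity: checking that the map $(a,i)\mapsto a\omega^{ki}$ covers $R_{k_r}$ uniformly with multiplicity $\frac{(r-1)k_r}{k}$. I would verify this via $\langle\omega^k\rangle\cdot\ff_r^*=\langle\omega^{\gcd(k,(q-1)/(r-1))}\rangle$ together with $|\langle\omega^k\rangle\cap\ff_r^*|=\frac{(r-1)n k_r}{q-1}$.
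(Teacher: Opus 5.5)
Your proof is correct. The key steps are all sound: the orthogonality computation, the fibre count for the surjective homomorphism $\ff_r^*\times\langle\omega^k\rangle\to\langle\omega^{k_r}\rangle=R_{k_r}$ (the constant fibre size $\frac{(r-1)k_r}{k}$ is just $\gcd(n,r-1)$), and the resulting identity $w(c_\gamma)=\frac{r-1}{r}\bigl(n-\frac{k_r}{k}\lambda_\gamma\bigr)$. That identity is exactly the form the paper later invokes in the proof of item $(f)$ of Theorem~\ref{thm: WC(x) para SBR}. You also correctly use $q=r^h$ only to make $\gamma\mapsto c_\gamma$ injective.

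The paper does not reprove this statement; it imports it from \cite{PV26b}. Its introduction suggests the route taken there: Ding's description of the weights of $\CC(k,q/r)$ through Gaussian periods of order $\gcd(k,\frac{q-1}{r-1})=k_r$, combined with the result of \cite{PV25b} that the eigenvalues of $\G(k_r,q)$ are exactly these Gaussian periods. Your argument merges these two ingredients into a single character-sum computation. The Cayley-graph eigenvector calculation replaces the ``eigenvalues are Gaussian periods'' lemma. Your subgroup-product count is precisely the mechanism by which the index $k$ of $\langle\omega^k\rangle$ is replaced by the index $k_r$ of $R_{k_r}$.

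What your approach buys is a self-contained proof in which the role of each hypothesis is visible. Without $q=r^h$ the same identity still holds, but each $A_w$ would equal $m(\lambda)/r^{[\ff_q:\ff_r]-h}$ and the graph would be disconnected. Connectedness and integrality also come for free. Your remark that $\lambda_\gamma$ is a rational algebraic integer even makes the appeal to the criterion $k_r\mid\frac{q-1}{p-1}$ unnecessary. The Gaussian-period formulation, in turn, ties the result directly to the existing literature on evaluating such periods.
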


\subsection*{Outlines}
The paper has the following structure. In Section \ref{sec: GPs} we give the spectrum of four families of GP-graphs: complete, Hamming, semiprimitive and Cartesian products of semiprimitive GP-graphs. 

In Section \ref{sec: GPs asociados a ICCs} we characterize the GP-graphs $\G(k,q)$ associated with SBR-codes of types I$a$, I$b$, II$a$ and II$b$. 
In Proposition \ref{prop: GPs from SBR} we give the precise parameters $k,q$ for these graphs, namely 
\begin{equation} \label{eq: 4GPs}
\G(1,r), \qquad 
\G\big(\tfrac{r+1}{2^{\ell-1}},r^2\big), \qquad 
\G\big(\tfrac{r^{2^{\ell-b}}-1}{2^{\ell-b} (r-1)},r^{2^{\ell-b}}\big), \qquad 
\G\big(\tfrac{r^{2^{\ell-b}}-1}{2^{\ell-1}(r-1)},r^{2^{\ell-b}}\big),
\end{equation}		 
for certain $b, r\in \N$.
Moreover, in Corollary~\ref{coro: the 4 GP-graphs} we show that these graphs correspond exactly to complete graphs, Hamming GP-graphs, semiprimitive GP-graphs and Cartesian products of semiprimitive GP-graphs. 

In Section \ref{sec: weight enumerators} we compute the weight enumerators for all the SBR-codes. First, in Proposition~\ref{prop: SpecGP SBR} we give the spectrum of the GP-graphs in \eqref{eq: 4GPs}. 
Then, using the spectral relation between irreducible cyclic codes and GP-graphs (see Theorem~\ref{teo: rel Ckq Gkq}), we get our main result Theorem~\ref{thm: WC(x) para SBR} giving explicit expressions for the weight enumerators $W_\mathcal{C}(x)$ for any irreducible cyclic code $\CC$ of length $2^\ell$.  
Namely, suppose that $p$ is an odd prime, $n=2^{\ell}$, $r=p^{s}$ with $\ell, s \in \N$ and $q=r^{h}$ with $h=ord_{n}(r)$. 
If 
	$$r=2^{b}c\pm 1$$ 
with $b, c \in \N$ such that $b\ge 2$ and $c$ odd, the weight enumerator of   
the $r$-ary irreducible cyclic code $\CC=\CC(\frac{q-1}{n},q/r)$ of length $2^\ell$ with $\ell\ge 2$ is given by: 

$\bullet$ If $r\equiv 1\pmod{4}$, then 
	$$ W_{\CC}(x) = \begin{cases}
					(r-1)x^{2^{\ell}} +1 	& \qquad \text{if $\ell \le b$}, \\[1mm] 
	\big( (r-1)x^{2^b} +1 \big)^{2^{\ell-b}} 	& \qquad \text{if $\ell > b$}.	
	\end{cases} $$

	$\bullet$ If $r\equiv 3\pmod{4}$, then 
	$$ 
	W_{\CC}(x) = 
	\begin{cases}
	(r^2-1) x^{2^{b+1}-2} +1		&  \text{if $c=1, \ell = b+1$}, \\[1mm]
	\big( (r^2-1) x^{2^{b+1}-2} +1\big)^{2^{\ell-b-1}}	& \text{if $c=1, \ell > b+1$}. 
	\\[1mm]	
	\big(r^{2}-1-2^{\ell-1}(r-1) \big)x^{2^{\ell}} + 2^{\ell-1}(r-1) x^{2^{\ell}-2} + 1		& \text{if $\ell < b+1$ (any $c$)}, 
	\\[1mm]
	\big(r^{2}-1-2^{\ell-1}(r-1) \big)x^{2^{\ell}} + 2^{\ell-1}(r-1) x^{2^{\ell}-2} + 1		& \text{if $c>1, \ell = b+1$}, 
	\\[1mm]
	\big( (r^{2}-1-2^b(r-1)) x^{2^{b+1}} + 2^{\ell-1}(r-1) x^{2^{\ell}-2} + 1 \big)^{2^{\ell-b-1}}	&  \text{if $c>1, \ell > b+1$}.	
\end{cases} 
$$

In Section \ref{sec: reduction}, we provide structural reductions for the SBR-codes and their associated GP-graphs. We prove in Theorem \ref{thm: SBR reduction} that each family of codes possesses a threshold minimal code $\CC_{\ell_0}$, and that the weight enumerators for all longer codes are simply exact powers of the threshold weight enumerator. 
Namely, 
	$$ W_{\CC_\ell}(x) = \big( W_{\CC_{\ell_0}}(x) \big)^{2^{\ell-\ell_0}} $$ 
where $\ell_0=b$ if $r\equiv 1 \pmod 4$ and $\ell_0=b+1$ if $r\equiv 3 \pmod 4$.

Finally, in Section \ref{sec: examples}, we illustrate these theoretical results with explicit computations. We provide the complete parameters, threshold codes, and exact weight enumerators for irreducible cyclic codes of length $2^\ell$ over finite fields of characteristics $p=3, 5,$ and $7$, summarizing the algebraic and geometric data in detailed tables.

\section{Some families of GP-graphs with known spectrum} \label{sec: GPs}
In this short preliminary section, we briefly recall the basic properties of GP-graphs and some families of GP-graphs with known spectrum.

Recall from the introduction that the \textit{generalized Paley graph} is the Cayley graph  
$$ \G(k,q) = Cay(\ff_{q},R_{k}) \qquad \text{with } \qquad R_{k} = \{ x^{k} : x \in \ff_{q}^*\}, $$
where $\ff_{q}$ is a finite field of $q$ elements, say $q=p^m$ with $p$ prime and $m \in \N$. 
That is, $\G(k,q)$ is the graph with vertex set $\ff_{q}$ and two vertices $u,v \in \ff_{q}$ are neighbors (directed edge) 
if and only if 
$v-u=x^k$ for some $x\in \ff_{q}^*$. 
The graphs $\Gamma(k,q)$ are also denoted $GP(q,\frac{q-1}k)$ in \cite{LP}.
If $\omega$ is a primitive element of $\ff_{q}$, then $R_{k} = \langle \omega^{k} \rangle = \langle \omega^{(k,q-1)} \rangle$. This implies that 
$$ \G(k,q)= \G(k',q) \qquad \text{with} \qquad k'=\gcd(k,q-1)$$
and that it is a $\frac{q-1}{\gcd(k,q-1)}$-regular graph. 
Thus, we can always assume that $k \mid q-1$. 
We have the following equivalences relative to the structure of GP-graphs: 
\begin{itemize}[leftmargin=2.5em]
\item $\G(k,q)$ is undirected if and only if $q$ is even or else $q$ is odd and $k \mid \tfrac{q-1}2$. \sk 

\item $\G(k,q)$ is connected if and only if $ord_{n}(p)=m$ where $n=\tfrac{q-1}{k}$.
\end{itemize}
and the following ones relative to the nature of the spectrum:  
	$$ Spec(\G(k,q)) \subset \mathbb{R} \quad \Leftrightarrow \quad k \mid \tfrac{q-1}2 \qquad \text{and} \qquad 
	Spec(\G(k,q)) \subset \mathbb{Z} \quad \Leftrightarrow \quad k \mid \tfrac{q-1}{p-1}.$$

\subsection*{Four families of GP-graphs with known spectrum}
The spectrum of GP-graphs is known for some few families. 
Here we list four families of GP-graphs having integral spectrum, that will be needed later.

\noindent{($a$)} \textit{Complete GP-graphs}. 
The spectrum of $\G(1,q)$ is integral and given by 
$$
Spec(\G(1,q)) = \big\{[-1]^{q-1}, [q-1]^{1}\big\}.
$$
since $\G(1,q)$ is the complete graph $K_q$.

\sk 

\noindent{($b$)} \textit{Semiprimitive GP-graphs}. 
Recall that a pair $(k,q)$ is a semiprimitive pair if $q=p^{m}$ is a prime power of $p$, such that $k\mid q-1$ and $-1$ is a power of $p$ modulo $k$. 
Equivalently, if $k\mid p^{j}+1$ for some $j\mid m$ with $m$ even.  
The spectrum of $\G(k,q)$ is integral and given by (see \cite{PV25b})
$$
Spec(\G(k,q)) = \big\{ [\tfrac{q-1}{k}]^{1},[ \tfrac{\sigma(k-1)\sqrt{q}-1}{k}]^{\frac{q-1}{k}}, [-\tfrac{\sigma\sqrt{q}+1}{k}]^{\frac{(k-1)(q-1)}{k}} \big\}, 
$$
where $\sigma=(-1)^{s+1}$ with $s=\frac{m}{2t}$ and $t$ the minimum $j$ such that $k\mid p^{j}+1$.

\sk 

\noindent{($c$)} \textit{Hamming GP-graphs}. 
A \textit{Hamming graph} $H(b,q)$ is a graph with vertex set $V=K^b$ where $K$ is any set of size $q$
(typically $\ff_q$ in applications), and where two $b$-tuples form and edge if and only if they differ in exactly one coordinate. Clearly, $H(b,q)$ is a connected graph.

Now, those GP-graphs which are Hamming graphs were characterized in \cite{LP}. 
In fact, a connected $\G(k,p^m)$ is Hamming if and only if 
	$$ k= \tfrac{p^{m}-1}n \qquad \text{with} \qquad n = b(p^{a}-1) $$ 
for some $b\mid m$ with $b>1$ and $m=ab$. In other words, the graph $\G(k,q)$ is Hamming if and only if 
	$$ k= \tfrac{p^{ab}-1}{b(p^a-1)} $$
with $b\in \N$ and $b(p^a-1)$ is a primitive divisor of $p^m-1$ (see \cite{PV21} for more details on this condition).
Moreover, in this case
\begin{equation}\label{eq: Hamming}
	\G(\tfrac{p^{ab}-1}{b(p^{a}-1)},p^{ab})\cong H(b,p^{a}).
\end{equation}

The spectra of $H(b,p^{a})$ is well-known, since 
any Hamming graph is the Cartesian product of the same complete graph of a given size, for instance 
	$$ H(b,p^{a}) \cong \square^b K_{p^{a}}.$$ 
Thus, Hamming graphs have integral spectrum given by (see \cite{PV25b})  
\begin{equation}\label{eq: Spec Hamming}
Spec(H(b,p^{a})) = \big\{ [ \ell\cdot p^{a}-b ]^{\binom{b}{\ell}(p^{a}-1)^{b-\ell}} : 0 \le \ell \le b \big\}.
\end{equation}

\noindent{($d$)} \textit{Cartesian decomposable GP-graphs}. 
Cartesian decomposable GP-graphs, that is those $\G(k,q)$ that can be decomposed as a Cartesian product for certain graphs 
$\G_1, \ldots, \G_t$, i.e.\@
$$\G(k,q) = \G_1 \,\square \cdots \square\, \G_t$$ 
were studied by Pearce and Praeger for undirected GP-graphs. 
In \cite{PP}, they showed that a connected GP-graph $\G(k,p^m)$ is Cartesian decomposable if and only if 
$$\tfrac{p^m-1}{k}=bc \qquad \text{with} \qquad b>1, \: b\mid m \: \text{ and } \: c \dagger p^{\frac mb}-1.$$ 
In this case, $\G$ turns up to be the product of a single connected GP-graph, that is 
$$\G \simeq \square^b \G_0 \qquad \text{with} \qquad \G_0 = \G \big(\tfrac{p^{\frac mb}-1}{c},p^{\frac mb}\big),$$ 
where $b$ and $c$ are integers as described above. 
The same characterization also hold for the directed case (see \cite{PV22}).

Recall that the eigenvalues of a Cartesian graph product are the sum of the eigenvalues of its factors. 
This implies that if 
	$$ Spec (\G_0) =\{[\lambda_1]^{m_1},\ldots, [\lambda_s]^{m_s}\},$$ 
then the eigenvalues of a Cartesian product graph $\G \simeq \square^b \G_0$ are given by
	$$
	\Lambda_{\ell_1,\ldots,\ell_s}=\ell_1 \lambda_1 +\cdots +\ell_s \lambda_s,
	$$
with $\ell_1,\ldots,\ell_s$ are non-negative integers such that $\ell_1+\cdots+\ell_s=b$. 
For each distinct eigenvalue $\lambda$, its multiplicity is
$$
m(\lambda)
=
\sum_{\substack{
		\ell_1,\ldots,\ell_s\geq 0, \\
		\ell_1+\cdots+\ell_s=b \\
		\Lambda_{\ell_1,\ldots,\ell_s} = \lambda
}}
	\tbinom{b}{\ell_1,\ldots,\ell_s} \, m_1^{\ell_1}\cdots m_s^{\ell_s}.
$$

Observe that the Hamming graph is a particular case of a Cartesian decomposable GP-graph.

It is worth noticing that, besides the families listed above, only $\G(3,q)$ and $\G(4,q)$, recently the eleven GP-graphs (see \cite{PV25b}) coming from the eleven exceptional 2-weight irreducible cyclic codes obtained by Schmidt and White (\cite{SW}), and the bilinear form GP-graphs (see \cite{PV26c}) have known spectrum.

\section{Characterizing the GP-graphs associated to SBR-codes} \label{sec: GPs asociados a ICCs}
In 2007, Sharma, Bakshi and Raka obtained the weight distribution of all $q$-ary codes of length $2^m$ and $q$ odd (see \cite{SBR}). We will refer to them as the \textit{Sharma-Bakshi-Raka codes} or simply by \textit{SBR-codes}.
They divided these codes into two families, according to whether $q$ is of the form 
$$ 2^bc +1 \qquad \text{or} \qquad 2^bc -1 \qquad \text{with} \quad b\ge 2, \: c \text{ odd},$$ 
and in each case they distinguished two possibilities:
$$ m \le b \qquad \text{or} \qquad m\ge b+1.$$

Here, we will first characterize all GP-graphs which are spectrally associated to these SBR-codes, which turn up to be of four different kinds: 
\begin{itemize}
\item complete graphs, \sk 

\item semiprimitive GP-graphs, \sk 

\item Hamming GP-graphs, and \sk

\item Cartesian product of semiprimitive GP-graphs. 
\end{itemize}
All these graphs have known spectrum (see Section \ref{sec: GPs}). 
Using the results in the previous sections we re-obtain the weight enumerators of these codes from the spectrum of the associated graphs. In particular, we get closed expressions for $W_\CC(x)$ in all the cases. 

\sk 

Since for us $q=p^m$, and we are also interested in $r$-ary codes, we will use the results in \cite{SBR} for $r$-ary codes of length $2^\ell$. 
Also, notice that if $r$ is odd, then $r\equiv \pm 1 \pmod{4}$, and hence we can assume that $r=2^{b} c \pm 1$ for some $b\in \N$ with $b\ge 2$ and $c$ odd.

From now on we will use the following notation:
	\begin{equation} \label{eq: k_{l,r}}
		k_{\ell,r} = \gcd(\tfrac{q-1}{2^\ell},\tfrac{q-1}{r-1}).
	\end{equation}
That is, $k_{\ell,r}$ is the number $k_r$ defined in \eqref{eq: k_r} taking $k=\frac{q-1}{2^\ell}$.

We want to give all GP-graphs $\G(k_{\ell,r},q)$ associated to the SBR-codes $\CC(k,q/r)$ of length $2^\ell$.
First, we compute the possible parameters $q$ and $k_r$ for these graphs.

\begin{lem} \label{lem: kr SBR}
Let $p$ be an odd prime and take $r=p^{s}$ and $q=r^{h}$ with $h=ord_{n}(r)$ where $n=2^{\ell}$ and $\ell, s \in \N$. 
Let $k_{\ell,r}$ 
as in \eqref{eq: k_{l,r}} and assume that $r=2^{b}c\pm 1$ with $b,c\in \N$, $b\ge 2$ and $c$ odd. 
If $\ell=1$, then $q=r$ and $k_{\ell,r}=1$ while if $\ell>1$ we have  
\begin{equation} \label{eq: q SBR}
	q=
	\begin{cases}
	\hfil	r 				& \qquad \text{if $2 \le \ell \le b$ and $r\equiv 1 \!\!\pmod{4}$,} \\[1mm]
	\hfil	r^2				& \qquad \text{if $\ell =b+1$ and $r\equiv 1 \!\!\pmod{4}$,} \\[1mm]
	\hfil	r^2 			& \qquad \text{if $2\le \ell \le b+1$ and $r\equiv 3 \!\!\pmod{4}$,} \\[1mm]
		r^{2^{\ell-b}} 	& \qquad \text{if $\ell > b+1$ \,$($any $r)$,} 
	\end{cases}
\end{equation}
and 
\begin{equation} \label{eq: kr SBR}
	k_{\ell,r}=
	\begin{cases}
		\hfil 1 										& \qquad \text{if $r\equiv 1 \!\!\pmod{4}$ and $2 \le \ell \le b$,} \\[2mm]
		\hfil \frac{r^{2^{\ell-b}}-1}{2^{\ell-b}(r-1)}  & \qquad \text{if $r\equiv 1 \!\!\pmod{4}$ and $\ell > b$,} \\[2mm]
		\hfil \frac{r+1}{2^{\ell-1}} 				    & \qquad \text{if $r\equiv 3 \!\!\pmod{4}$ and $2\le \ell \le b+1$,} \\[2mm]
		\hfil \frac{r^{2^{\ell-b}}-1}{2^{\ell-1}(r-1)}  & \qquad \text{if $r\equiv 3 \!\!\pmod{4}$ and $\ell > b+1$.}
	\end{cases}
\end{equation}
\end{lem}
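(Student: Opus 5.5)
The plan is to reduce everything to $2$-adic valuations. Two observations drive it.

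First, since $2^\ell \mid q-1$ and $r-1 \mid q-1$, we have the elementary identity
$$k_{\ell,r}=\gcd\big(\tfrac{q-1}{2^\ell},\tfrac{q-1}{r-1}\big)=\frac{q-1}{\lcm(2^\ell,\,r-1)} .$$
So once $q$ is known, $k_{\ell,r}$ follows from the lcm alone.

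Second, $h=ord_{2^\ell}(r)$ is the order of an element of the $2$-group $(\mathbb{Z}/2^\ell\mathbb{Z})^*$. Hence $h=2^j$, where $j\ge 0$ is minimal with $v_2(r^{2^j}-1)\ge \ell$.

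For $\ell=1$ this gives at once $h=1$, $q=r$, and $\lcm(2,r-1)=r-1$, so $k_{\ell,r}=1$.

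The key step is the lifting-the-exponent formula for the prime $2$. For odd $r$ and $j\ge 1$,
$$v_2(r^{2^j}-1)=v_2(r^2-1)+j-1 .$$
I will prove this by induction from $r^{2^{j+1}}-1=(r^{2^j}-1)(r^{2^j}+1)$, using that $r^{2^j}+1\equiv 2 \pmod 4$ for $j\ge1$. This is the only genuinely arithmetic point, and the main thing to get right is its bookkeeping in the two residue classes.

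Case $r=2^bc+1\equiv 1\pmod 4$. Here $v_2(r-1)=b$ and $v_2(r+1)=1$, so $v_2(r^{2^j}-1)=b+j$ for all $j\ge 0$.
\begin{itemize}
\item If $2\le \ell\le b$, then $j=0$, so $q=r$. Also $\lcm(2^\ell,r-1)=r-1$, which gives $k_{\ell,r}=1$.
\item If $\ell>b$, then $j=\ell-b$, so $q=r^{2^{\ell-b}}$; this includes $q=r^2$ when $\ell=b+1$. Moreover $\lcm(2^\ell,2^bc)=2^\ell c=2^{\ell-b}(r-1)$, which yields the stated $k_{\ell,r}$.
\end{itemize}

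Case $r=2^bc-1\equiv 3\pmod 4$. Here $v_2(r-1)=1$ and $v_2(r+1)=b$, so $v_2(r^{2^j}-1)=b+j$ for $j\ge1$, while $v_2(r-1)=1<\ell$ for $\ell\ge2$.
\begin{itemize}
\item If $2\le\ell\le b+1$, then $h=2$ and $q=r^2$. Since $v_2(r-1)=1$, we get $\lcm(2^\ell,r-1)=2^{\ell-1}(r-1)$. Therefore $k_{\ell,r}=\frac{r^2-1}{2^{\ell-1}(r-1)}=\frac{r+1}{2^{\ell-1}}$.
\item If $\ell>b+1$, then $j=\ell-b\ge 2$, so $q=r^{2^{\ell-b}}$. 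The same lcm computation gives $k_{\ell,r}=\frac{r^{2^{\ell-b}}-1}{2^{\ell-1}(r-1)}$.
\end{itemize}

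Collecting these cases gives \eqref{eq: q SBR} and \eqref{eq: kr SBR}. The only care needed is to check that the case boundaries ($\ell=b$ versus $\ell=b+1$) match those in the statement.
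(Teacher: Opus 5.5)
Your proof is correct. For $k_{\ell,r}$ it is the same argument as the paper's. Both use $\gcd\big(\tfrac{q-1}{2^\ell},\tfrac{q-1}{r-1}\big)=\tfrac{q-1}{\lcm(2^\ell,r-1)}$ and then compute the lcm in each case: it is $r-1$ when $\ell\le b$ and $r\equiv 1\pmod 4$, it is $2^{\ell}c=2^{\ell-b}(r-1)$ when $\ell>b$ and $r\equiv 1\pmod 4$, and it is $2^{\ell-1}(r-1)$ when $r\equiv 3\pmod 4$.

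The difference is in how you obtain $q$. The paper does not prove the formula for $h=ord_{2^\ell}(r)$; it quotes it from Sharma--Bakshi--Raka (their equation (1)). You derive it from scratch: $h$ is a power of $2$ because $(\mathbb{Z}/2^\ell\mathbb{Z})^*$ is a $2$-group, and the $2$-adic lifting-the-exponent identity $v_2(r^{2^j}-1)=v_2(r^2-1)+j-1$ then gives its exponent.

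What your route buys is a self-contained lemma that also explains where the thresholds come from. We have $v_2(r^2-1)=b+1$ in both residue classes, while $v_2(r-1)$ is $b$ or $1$. That is exactly why the cut-off is $\ell\le b$ for $r\equiv 1\pmod 4$ but $\ell\le b+1$ for $r\equiv 3\pmod 4$. It also shows that the separately listed case $q=r^2$ for $\ell=b+1$, $r\equiv 1\pmod 4$, is just the first instance of $q=r^{2^{\ell-b}}$. The paper's version is shorter, but its only nontrivial arithmetic input rests on the external reference.
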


\begin{proof}
As the authors mentioned in \cite{SBR} (see equation (1) in page 1087), we have that
\begin{equation}\label{eq: ord SBR}
	ord_{n}(r)=
	\begin{cases}
	\hfil	1 & \qquad \text{if $\ell=1$ or else if $r\equiv 1 \!\!\pmod{4}$ and $2 \le \ell \le b$,} \\
	\hfil	2 & \qquad \text{if $r\equiv 1 \!\!\pmod 4$ and $\ell = b+1$,}\\
	\hfil	2 & \qquad \text{if $r\equiv 3 \!\!\pmod 4$ and $2\le \ell \le b+1$,}\\
		2^{\ell-b} & \qquad \text{if $\ell > b+1$.} 
	\end{cases}
\end{equation}
Thus, we obtain the following values of $q$ in these cases
$$
q=	\begin{cases}
\hfil r 				& \qquad \text{if $\ell=1$ or else if $r\equiv 1 \!\!\pmod{4}$ and $2 \le \ell \le b$,} \\
\hfil	r^2				& \qquad \text{if $\ell =b+1$ and $r\equiv 1 \!\!\pmod{4}$,} \\
\hfil	r^2 			& \qquad \text{if $r\equiv 3 \!\!\pmod{4}$ and $2\le \ell \le b+1$,}\\
	r^{2^{\ell-b}}  & \qquad \text{if $\ell > b+1$ \, $($any $r)$, } 
\end{cases}
$$
and thus \eqref{eq: q SBR} follows.

We now prove \eqref{eq: kr SBR}. 
Recall that 
$\gcd(\tfrac{d}{a}, \tfrac{d}{b}) = \tfrac{d}{\lcm(a,b)}$
for any $a, b \in \Div(d)$ with $d \in \N$.
The assertion is clear when $\ell=1$ or else if $1<\ell \le b$ and $r=1+2^{b}c$, since in this case $q=r$ and so  $k_{\ell,r} = \gcd(\frac{q-1}{2^\ell},\frac{q-1}{r-1})=1$.
When $\ell>b$ and $r=1+2^b c$ with $c$ odd then $\lcm(n,r-1) = 2^{\ell}c = 2^{\ell-b}(r-1)$, which implies that 
$$
	k_{\ell,r} = \frac{r^{2^{\ell-b}}-1}{2^{\ell-b}(r-1)}.
$$ 
On the other hand, if $r=2^{b}c-1$ and $1<\ell$, then 
$\lcm(n,r-1) = \tfrac{n(r-1)}{2} = 2^{\ell-1}(r-1)$. 
This implies that
$$	k_{\ell,r}=
\begin{cases}
	\frac{r^{2}-1}{2^{\ell-1}(r-1)} 			& \qquad \text{if $2\le \ell\le b+1$,} \\[2mm]
	\frac{r^{2^{\ell-b}}-1}{2^{\ell-1}(r-1)} 	& \qquad \text{if $\ell>b+1$,}
\end{cases} $$
from which equation \eqref{eq: kr SBR} follows.
\end{proof}

Using the previous result we are able to obtain the GP-graphs corresponding to all $r$-ary irreducible cyclic codes of length $2^{\ell}$. 
We will show that they belong to four different kinds: complete, semiprimitive, Hamming and Cartesian decomposable GP-graphs.

\begin{prop} \label{prop: GPs from SBR}
Let $n=2^{\ell}$ with $\ell \in \N$, $p$ an odd prime and let $r=p^{s}$ and $q=r^{h}$ with $h=ord_{n}(r)$. 
Let $k_{\ell,r}$ as in \eqref{eq: k_{l,r}} 
and let $r=2^{b}c\pm 1$ with $b, c \in \N$ such that $b\ge 2$ and $c$ odd. 
Then, the graph $\G=\G(k_{\ell,r},q)$ takes the different possible forms: if $\ell=1$ then $\G=\G(1,r)$ and if $\ell \ge 2$ we have 
\begin{equation} \label{eq: Gkrq}
	\G=\begin{cases}
		\G(1,r) 						& \qquad \text{if $r\equiv 1\!\!\!\!\pmod 4$ and $\ell \le b$}, \\[1mm]
		\G(\frac{r^{2^{\ell-b}}-1}{2^{\ell-b} (r-1)},r^{2^{\ell-b}})	& \qquad \text{if $r\equiv 1\!\!\!\!\pmod 4$ and $\ell > b$}, \\[1mm]
		\G(\frac{r+1}{2^{\ell-1}},r^2)	& \qquad \text{if $r\equiv 3\!\!\!\!\pmod{4}$ and $\ell \le b+1$}, \\[1mm]
		\G(\frac{r^{2^{\ell-b}}-1}{2^{\ell-1}(r-1)},r^{2^{\ell-b}})	& \qquad \text{if $r\equiv 3\!\!\!\!\pmod{4}$ and $\ell > b+1$}. 
	\end{cases}
\end{equation}
\end{prop}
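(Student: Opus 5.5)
The proof is a direct substitution of Lemma~\ref{lem: kr SBR} into the definition $\G=\G(k_{\ell,r},q)$. No spectral or structural input is needed at this stage; the identification of these graphs as complete, Hamming, semiprimitive or Cartesian products is left to Corollary~\ref{coro: the 4 GP-graphs}. I will go through the cases of Lemma~\ref{lem: kr SBR} and pair each value of $q$ in \eqref{eq: q SBR} with the corresponding value of $k_{\ell,r}$ in \eqref{eq: kr SBR}.

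\textbf{The case $\ell=1$, and $r\equiv 1 \pmod 4$.} For $\ell=1$ the lemma gives $q=r$ and $k_{\ell,r}=1$, so $\G=\G(1,r)$. Now let $r\equiv 1\pmod 4$.
\begin{itemize}
\item If $2\le \ell\le b$, then $q=r$ and $k_{\ell,r}=1$, which gives $\G(1,r)$.
\item If $\ell>b$, the lemma splits $q$ into two subcases, $\ell=b+1$ with $q=r^2$ and $\ell>b+1$ with $q=r^{2^{\ell-b}}$. Both are covered by the single formula $q=r^{2^{\ell-b}}$, because $2^{\ell-b}=2$ when $\ell=b+1$. Combined with $k_{\ell,r}=\frac{r^{2^{\ell-b}}-1}{2^{\ell-b}(r-1)}$, this gives the second line of \eqref{eq: Gkrq}.
\end{itemize}

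\textbf{The case $r\equiv 3\pmod 4$.}
\begin{itemize}
\item If $2\le \ell\le b+1$, then $q=r^2$. The proof of the lemma gives $k_{\ell,r}=\frac{r^2-1}{2^{\ell-1}(r-1)}$, and this simplifies to $\frac{r+1}{2^{\ell-1}}$ after cancelling $r-1$. This is an integer: $r+1=2^bc$ with $\ell-1\le b$. So $\G=\G(\frac{r+1}{2^{\ell-1}},r^2)$.
\item If $\ell>b+1$, then $q=r^{2^{\ell-b}}$ and $k_{\ell,r}=\frac{r^{2^{\ell-b}}-1}{2^{\ell-1}(r-1)}$, which gives the last line of \eqref{eq: Gkrq}.
\end{itemize}

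\textbf{Well-definedness.} In every case $k_{\ell,r}$ divides $q-1$, being a gcd of divisors of $q-1$. So $k_{\ell,r}$ is already in the normalized form $\gcd(k,q-1)$ used in Section~\ref{sec: GPs}, and the graphs are well defined as written.

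\textbf{Expected obstacle.} The only delicate point is bookkeeping at the boundary cases. Lemma~\ref{lem: kr SBR} separates $\ell=b+1$ from $\ell>b+1$ when $r\equiv 1 \pmod 4$, while the proposition separates $\ell\le b$ from $\ell>b$. I will check explicitly that the value $\operatorname{ord}_{2^{\ell}}(r)=2$ at $\ell=b+1$ from \eqref{eq: ord SBR} coincides with $2^{\ell-b}$ there, so that the two descriptions agree.
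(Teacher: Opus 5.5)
Your proposal is correct and follows the paper's proof: both read off $q$ and $k_{\ell,r}$ from Lemma~\ref{lem: kr SBR} case by case, with no further input. Your explicit check that the $\ell=b+1$ and $\ell>b+1$ subcases for $r\equiv 1 \pmod 4$ both give $q=r^{2^{\ell-b}}$ makes precise what the paper's case $(b)$ leaves as a ``direct computation''.
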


\begin{proof}
Let us denote by ($a$), ($b$), ($c$) and ($d$) the four cases in \eqref{eq: Gkrq}.

\noindent ($a$) 
This is clear from Lemma \ref{lem: kr SBR}, since in this case $k_r=1$ and $q=r$. 

\noindent $(b)$ 
By direct computation, in this case we have that $n_r=2^{\ell-b}(r-1)$ and $m=2^{\ell-b}$. 

\noindent ($c$) 
This also follows from the Lemma \ref{lem: kr SBR} since in this case $k_r=\frac{r+1}{2^{\ell-1}}$ and $q=r^{2}$. 

\noindent $(d)$ 
Finally, when $\ell>b+1$ and $r=2^{b}c-1$, by direct computation
	$$	
		k_{\ell,r} = \frac{r^{2^{\ell-b}}-1}{2^{\ell-1}(r-1)} \qquad \text{and} \qquad q=r^{2^{\ell-b}},
	$$ 
as asserted.
\end{proof}

As a direct consequence, under the same hypothesis and notations of Proposition \ref{prop: GPs from SBR} we have the following. 
\begin{coro} \label{coro: the 4 GP-graphs}
Let $\G=\G(k_{\ell,r},q)$ be one of the graphs in \eqref{eq: Gkrq}. 
In the first and second cases, $\G$ is the complete graph $K_r$ and the Hamming graph $H(2^{\ell-b},r)$, respectively. 
In the third and fourth cases, if $c>1$ or else if $c=1$ and $\ell<b+1$, then $\G$ is respectively a semiprimitive GP-graph or 
Cartesian decomposable GP-graph (i.e.,  $\G \simeq \square^{2^{\ell-b-1}} \, \G(\tfrac{r+1}{2^{b}},r^2)$ is the Cartesian product of isomorphic copies of a semiprimitive GP-graph). 
Moreover, if $c=1$ and $\ell=b+1$, in the third case $\G$ is the complete graph $K_{r^2}$ and in the fourth one if $c=1$ and $\ell>b+1$, $\G$ is a Hamming graph, more precisely $\G=H(2^{\ell-b-1},r^{2})$.
\end{coro}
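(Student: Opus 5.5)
The plan is to go through the four cases of \eqref{eq: Gkrq} in order, matching each graph against the characterizations recalled in Section~\ref{sec: GPs}.

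\emph{First case.} This is immediate: $\G(1,r)$ has connection set $\ff_r^*$, so it is $K_r$.

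\emph{Second case.} Here $r=2^bc+1$ and $\ell>b$. Write $r=p^s$, $q=p^{s2^{\ell-b}}$ and $k=\frac{r^{2^{\ell-b}}-1}{2^{\ell-b}(r-1)}$. Then $k$ has exactly the shape $\frac{p^{ab'}-1}{b'(p^a-1)}$ with $a=s$ and $b'=2^{\ell-b}>1$, so \eqref{eq: Hamming} gives $\G\cong H(2^{\ell-b},r)$. The only hypothesis to check is that $\G$ is connected, equivalently that $n=b'(r-1)$ is a primitive divisor of $q-1$, i.e.\ $ord_{n}(p)=s2^{\ell-b}$. I will get this from Theorem~\ref{teo: rel Ckq Gkq}, which already asserts that $\G(k_r,q)$ is integral and connected. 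Alternatively, one can argue directly: $ord_{n}(r)=2^{\ell-b}$, because $2^\ell$ divides $n$ and \eqref{eq: ord SBR} applies.

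\emph{Third case.} Here $r=2^bc-1$, $2\le\ell\le b+1$ and $k=\frac{r+1}{2^{\ell-1}}$, so $k\mid r+1=p^s+1$ and $q=r^2=p^{2s}$. This is exactly the semiprimitive condition with $j=s$ and $m=2s$ even, so $(k,q)$ is a semiprimitive pair. When $c=1$ and $\ell=b+1$ we get $k=2^b/2^b=1$, and the graph is $\G(1,r^2)=K_{r^2}$. This degenerate situation is why that subcase is stated separately.

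\emph{Fourth case.} Here $\ell>b+1$ and $k=\frac{r^{2^{\ell-b}}-1}{2^{\ell-1}(r-1)}$, so $\frac{q-1}{k}=2^{\ell-1}(r-1)$. I will apply the Pearce--Praeger criterion with $b'=2^{\ell-b-1}$ factors over the subfield $\ff_{r^2}$. First, $q=(r^2)^{2^{\ell-b-1}}$, so $b'\mid m$. Second, we need $\frac{q-1}{k}=b'c'$ with $c'=2^b(r-1)$, and $2^b(r-1)$ divides $r^2-1$ because $r+1=2^bc$; the quotient is $\frac{r^2-1}{2^b(r-1)}=\frac{r+1}{2^b}=c$. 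The condition $c'\dagger r^2-1$, meaning $c'$ is a primitive divisor, holds since $c'\nmid r-1$ and $r^2-1$ is the only smaller-level candidate; connectivity again comes from Theorem~\ref{teo: rel Ckq Gkq}. The criterion then gives $\G\simeq\square^{2^{\ell-b-1}}\G(\frac{r^2-1}{2^b(r-1)},r^2)=\square^{2^{\ell-b-1}}\G(\frac{r+1}{2^b},r^2)$. The factor is semiprimitive by the third-case argument. When $c=1$ the factor is $K_{r^2}$, so $\G=H(2^{\ell-b-1},r^2)$.

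The main obstacle is verifying the primitive-divisor / connectedness hypotheses in the Lim--Praeger and Pearce--Praeger criteria. I intend to sidestep this by invoking the connectedness guaranteed in Theorem~\ref{teo: rel Ckq Gkq}, together with $ord_{2^\ell}(r)$ from \eqref{eq: ord SBR}.
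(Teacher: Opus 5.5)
Your proposal is correct and is essentially the paper's argument. It uses the same case-by-case matching against the complete-graph, Lim--Praeger Hamming, semiprimitive and Pearce--Praeger criteria, and takes connectedness from the spectral relation: the paper cites Lemma~4.2 of \cite{PV26b}, while you use the connectedness asserted in Theorem~\ref{teo: rel Ckq Gkq}. Your extra primitive-divisor checks, which the paper omits, are cleanest via one fact: $r-1=p^s-1\mid p^i-1$ forces $s\mid i$. This gives $2^b(r-1)\dagger r^2-1$, since only $i=s$ needs ruling out. It also turns $ord_n(r)=2^{\ell-b}$ into $ord_n(p)=s2^{\ell-b}$ in your alternative connectivity argument.
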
	 

\begin{proof}
By Proposition \ref{prop: GPs from SBR}, the first case is when $k_{\ell,r}=1$ and $q=r$ and so $\G(k_{\ell,r},q)$ is the complete graph $K_r$. 

In the second case, we have that $n_r=2^{\ell-b}(r-1)$ and $m=2^{\ell-b}$. By Lemma~4.2 from \cite{PV26b}, we have that $\frac{q-1}{k_r}$ is a primitive divisor of $q-1$ and so $\G(k_{\ell,r},q)$ is a connected GP-graph, thus by \eqref{eq: Hamming} we obtain that $\G(k_{\ell,r},q)$ is a Hamming GP-graph isomorphic to $H(2^{\ell-b},r)$, as asserted.

For the third case. If $c>1$ or $\ell<b+1$, then $k_{\ell,r} = 2^{b-\ell+1}c$ with $p^s=r\equiv-1\pmod{k_{\ell,r}}$ and hence the graph is semiprimitive.
If $c=1$ and $\ell=b+1$, then $k_r=1$, so the graph is the complete graph $K_{r^2}$.

Finally, in the fourth case, $\ell>b+1$ and $r=2^{b}c-1$. 
Put $\ell=b+t$ for some integer $t\ge 2$. Now, notice that $r^{2^t}=(r^{2})^{2^{t-1}}$, which implies that 
$$	
\frac{r^{2^{\ell-b}}-1}{2^{\ell-1}(r-1)} = \frac{r^{2^t}-1}{2^{b}2^{t-1}(r-1)} = \frac{(r+1)(r^{2^t}-1)}{2^{b}2^{t-1}(r^2-1)}.
$$ 
By Lemma~4.2 from \cite{PV26b} the graph $\G(k_r,q)$ is connected and so the last GP-graph of the list is the following Cartesian-decomposable graph (see ($d$) of Section \ref{sec: GPs})
$$
\G(\tfrac{r^{2^t}-1}{2^{b}2^{t-1}(r-1)},r^{2^t})\simeq \square^{2^{t-1}} \, \G(\tfrac{r^2-1}{2^b(r-1)},r^2)= \square^{2^{t-1}} \, \G(\tfrac{r+1}{2^b},r^2).
$$
Notice that if $c=1$, then we obtain that $\G$ is the product of $2^{\ell-b-1}$-copies of the complete graph $K_{r^2}$. On the other hand if $c>1$, we obtain that $\G$ is the the product of $2^{\ell-b-1}$-copies of the semiprimitive GP-graph $\G(\tfrac{r+1}{2^b},r^2)$, as desired.
\end{proof}

\section{The weight enumerators of SBR-codes} \label{sec: weight enumerators}
We begin by giving names to the four cases of SBR-codes adapted to our $r$-ary notation. Let $\CC$ be an $r$-ary irreducible cyclic code of length $2^\ell$, where $r = 2^b c \pm 1$ with $b \ge 2$ and $c$ odd.
	\begin{itemize}[leftmargin=2.5em]
		\item  \textbf{Type I codes}: if $r=2^bc +1$. We call $\CC$ of \textbf{type Ia} if $\ell \le b$ and of \textbf{type Ib} if $\ell > b$. \msk
		
		\item \textbf{Type II codes}: if $r=2^bc -1$. We call $\CC$ of \textbf{type IIa} if $\ell \le b+1$ and of \textbf{type IIb} if $\ell > b+1$.
	\end{itemize}

We point out that for the SBR-codes of type II, except for the trivial cases of lengths $n=2,4$ (i.e., $\ell=1,2$), the weight enumerator of the codes are not given explicitly in \cite{SBR}, but with a somewhat complicated expression.

Here, we will obtain explicit expressions of the weight enumerators for all the SBR-codes. 
We will first obtain the spectrum of the associated GP-graphs obtained in the previous section and then use the spectral relation between GP-graphs and ICC codes in this case.

We begin by giving the spectrum of the GP-graphs associated to SBR-codes.

\begin{prop} \label{prop: SpecGP SBR}
Let $n=2^{\ell}$ with $\ell \in \N$, $p$ an odd prime and let $r=p^{s}$ with $s \in \N$ and take $q=r^{h}$ with $h=ord_{n}(r)$. 
Let $k_{\ell,r} = \gcd(\frac{q-1}{2^\ell},\frac{q-1}{r-1})$ and let $r=2^{b}c\pm 1$ with $b,c \in \N$ such that $b\ge 2$ and $c$ odd. Then, the spectrum of the graphs $\G(k_{\ell,r},q)$ are given by

\noindent $(a)$ Assume $r\equiv 1\pmod{4}$. 

\begin{enumerate}[$(i)$]
	\item If $\ell \le b$, then 
$$ Spec(\G(k_{\ell,r},q)) = \big\{[-1]^{r-1}, [r-1]^{1}\big\}.$$ 

	\item If $\ell > b$, then  
$$ Spec(\G(k_{\ell,r},q)) = \big\{ [2^{\ell-b}(r-1)-jr]^{\binom{2^{\ell-b}}{j}(r-1)^j} \big\}_{\{0\le j \le 2^{\ell-b}\}}. $$ 
\end{enumerate}

\noindent $(b)$ Assume $r\equiv 3\pmod{4}$. 

\begin{enumerate}[$(i)$]	
	
	\item If $\ell=b+1$ and $c=1$, then 
	\begin{equation*} \label{eq: Spec pot 2 comp}
		Spec(\G(k_{\ell,r},q)) = \big\{[r^2-1]^{1}, [-1]^{r^2-1}\big\}.
	\end{equation*}
	
	\item If $\ell > b+1$ and $c=1$, then 
$$ Spec(\G(k_{\ell,r},q)) = \big\{ [2^{\ell-b-1}(r^2-1)-jr^2]^{\binom{2^{\ell-b-1}}{j}(r^2-1)^j} \big\}_{\{0\le j \le 2^{\ell-b-1}\}}. $$ 

	\item If $2 \le \ell < b+1$ or else 
	$\ell=b+1$ and $c>1$, then 
	\begin{equation*} \label{eq: Spec pot 2 semip}
		Spec(\G(k_{\ell,r},q)) = \big\{[2^{\ell-1}(r-1)]^{1}, [r-2^{\ell-1}]^{2^{\ell-1}(r-1)}, [-2^{\ell-1}]^{r^{2}-1-2^{\ell-1}(r-1)}\big\}.
	\end{equation*}
	
	\item If $\ell > b+1$ and $c>1$, then
		$$ Spec(\G(k_{\ell,r},q)) = \big\{ [\Lambda_{\ell_1,\ell_2,\ell_3}]^{\mu_{\ell_1,\ell_2,\ell_3}} \big\}, $$
	where the eigenvalues are given by
		$$ \Lambda_{\ell_1,\ell_2,\ell_3} = \ell_1 (2^{b}(r-1))+ \ell_2 (r-2^{b}) +\ell_3 (-2^{b}) $$
	and their corresponding multiplicities are
		$$ \mu_{\ell_1,\ell_2,\ell_3} = \tbinom{2^{\ell-b-1}}{\ell_1,\ell_2,\ell_3}m_1^{\ell_1} m_2^{\ell_2} m_3^{\ell_3} $$
	for all non-negative integers $\ell_1, \ell_2, \ell_3$ satisfying 
			$$\ell_1+\ell_2+\ell_3=2^{\ell-b-1},$$ 
	where $m_1=1$, $m_2=2^{b}(r-1)$ and $m_3=r^{2}-1-2^{b}(r-1)$. Note that some eigenvalues $\Lambda_{\ell_1,\ell_2,\ell_3}$ and some multiplicities $\mu_{\ell_1,\ell_2,\ell_3}$ might coincide for different choices of the tuples $(\ell_1, \ell_2, \ell_3)$. More precisely, for each eigenvalue $\lambda$ of $\G$ its multiplicity is given by
		$$
		m(\lambda) = \sum_{\substack{
				\ell_1,\ell_2,\ell_3\geq 0\\
				\ell_1+\ell_2+\ell_3=2^{\ell-b-1}\\
				\Lambda_{\ell_1,\ell_2,\ell_3}=\lambda
		}} 	\mu_{\ell_1,\ell_2,\ell_3}.
		$$
\end{enumerate}
\end{prop}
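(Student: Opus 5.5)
The proof is case by case. Corollary~\ref{coro: the 4 GP-graphs} identifies each graph $\G(k_{\ell,r},q)$ of Proposition~\ref{prop: GPs from SBR}, and we then read off its spectrum from the four families listed in Section~\ref{sec: GPs}.

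\textbf{Case $(a)$.} For $(a)(i)$ the graph is $K_r=\G(1,r)$, so the spectrum is $\{[-1]^{r-1},[r-1]^1\}$ by item $(a)$ of Section~\ref{sec: GPs}.

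For $(a)(ii)$ the graph is $H(2^{\ell-b},r)$, and we use \eqref{eq: Spec Hamming} with $b\to 2^{\ell-b}$ and $p^a\to r$. A more transparent route is to view $H(N,r)=\square^N K_r$ with $N=2^{\ell-b}$. Choose $j$ factors to contribute $-1$ and the remaining $N-j$ to contribute $r-1$. This gives the eigenvalue $(N-j)(r-1)-j=N(r-1)-jr$ with multiplicity $\binom{N}{j}(r-1)^j$, which is exactly the stated formula.

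\textbf{Case $(b)$, items $(i)$ and $(ii)$.} Here $c=1$. In $(b)(i)$ the graph is $K_{r^2}$, so its spectrum is immediate. In $(b)(ii)$ the graph is $H(2^{\ell-b-1},r^2)=\square^{N}K_{r^2}$ with $N=2^{\ell-b-1}$, and the same counting as in $(a)(ii)$ with $r$ replaced by $r^2$ gives the claim.

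\textbf{Case $(b)(iii)$.} The graph is the semiprimitive graph $\G(k,r^2)$ with $k=\frac{r+1}{2^{\ell-1}}=2^{b-\ell+1}c$, where $q=r^2$ and $\sqrt q=r$. We apply formula $(b)$ of Section~\ref{sec: GPs}, which requires computing $\sigma$.

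Since $k\mid r+1=p^s+1$, let $t$ be the least $j$ with $k\mid p^j+1$. Then $t\mid s$, and $s/t$ is odd: if $p^t\equiv-1\pmod k$ then $p^{s}\equiv(-1)^{s/t}$, and this must equal $-1$, forcing $s/t$ odd since $k>2$. The case $k=2$, i.e.\ $\ell=b$ and $c=1$, must be checked separately; there $t$ can be taken as the least $j$ with $2\mid p^j+1$, namely $j=1$, and the formula is the Paley one. We have $m=2s$ in the $p$-notation, so $m/(2t)=s/t$ is odd. Hence $\sigma=(-1)^{s/t+1}=1$.

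The formula then yields:
\begin{itemize}
\item the eigenvalue $\frac{r^2-1}{k}=2^{\ell-1}(r-1)$ with multiplicity $1$;
\item the eigenvalue $\frac{(k-1)r-1}{k}=r-\frac{r+1}{k}=r-2^{\ell-1}$ with multiplicity $\frac{q-1}{k}=2^{\ell-1}(r-1)$;
\item the eigenvalue $-\frac{r+1}{k}=-2^{\ell-1}$ with multiplicity $(k-1)\frac{q-1}{k}=r^2-1-2^{\ell-1}(r-1)$.
\end{itemize}
These match the statement.

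\textbf{Case $(b)(iv)$.} By Corollary~\ref{coro: the 4 GP-graphs}, $\G\simeq\square^{2^{\ell-b-1}}\G(\frac{r+1}{2^b},r^2)$. The factor is the case $(b)(iii)$ graph with $\ell=b+1$, so its spectrum is
$$\{[2^b(r-1)]^1,\ [r-2^b]^{2^b(r-1)},\ [-2^b]^{r^2-1-2^b(r-1)}\},$$
valid since $c>1$ makes it semiprimitive. Plugging these into the Cartesian-product eigenvalue and multiplicity formulas of item $(d)$ of Section~\ref{sec: GPs} gives the $\Lambda_{\ell_1,\ell_2,\ell_3}$, the $\mu_{\ell_1,\ell_2,\ell_3}$, and the summed multiplicities $m(\lambda)$.

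The only delicate step is the determination of $\sigma=1$ in $(b)(iii)$, including the degenerate small-$k$ case. Everything else is bookkeeping once the graph identification of Corollary~\ref{coro: the 4 GP-graphs} is in hand.
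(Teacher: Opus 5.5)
Your proposal is correct and follows the paper's proof. Both identify each graph via Corollary~\ref{coro: the 4 GP-graphs} and then read off its spectrum from the complete, Hamming, semiprimitive and Cartesian-product formulas of Section~\ref{sec: GPs}. Your explicit check that $\sigma=1$ in $(b)(iii)$, which the paper only asserts, is a useful addition, with two small fixes. First, the claim $t\mid s$ needs its one-line reason: for $k>2$ the order of $p$ modulo $k$ is exactly $2t$, since $-1$ is the unique involution in the cyclic group $\langle p\rangle$, so $p^s\equiv-1$ iff $s/t$ is an odd integer. Second, in the case $k=2$ you can close the argument by noting that $s$ is odd because $r\equiv 3\pmod 4$, so again $\sigma=(-1)^{s+1}=1$.
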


\begin{proof}
($a$) Item ($i$) is clear from the above proposition, since in this case $\G(k_{\ell,r},q) = K_r$ which has spectrum as asserted.
For ($ii$), by Corollary \ref{coro: the 4 GP-graphs}, the graph  $\G(k_{\ell,r},q)$ is isomorphic $H(2^{\ell-b},r)$, thus we can obtain the spectra of $\G(k_{\ell,r},q)$ directly from \eqref{eq: Spec Hamming}.

\noindent ($b$) 
Applying Corollary \ref{coro: the 4 GP-graphs} we have that: in case ($i$), $\G(k_{\ell,r},q)$ is the complete graph $K_{r^2}$ which has spectrum as asserted; in case ($ii$), $\G(k_{\ell,r},q)$ is the semiprimitive GP-graph $\G(\frac{r+1}{2^{\ell-1}})$ 
with $\sigma=1$ (see Section~\ref{sec: GPs}), so its spectrum is as stated; and in case ($iii$), 
$\G(k_{\ell,r},q)$ is isomorphic $H(2^{\ell-b-1},r^2)$, thus we can obtain the spectrum of $\G(k_{\ell,r},q)$ from \eqref{eq: Spec Hamming}.
Finally, we prove ($iv)$.
Let $t=\ell-b$. By Corollary \ref{coro: the 4 GP-graphs}, we have that
	$$
		\G(k_{\ell,r},q) \simeq \square^{2^{t-1}} \, \G(\tfrac{r+1}{2^b},r^2).
	$$
Notice that the graph $\G(\tfrac{r+1}{2^b},r^2)$ is a semiprimitive GP-graph with $\sigma=1$. 
Thus, after some straightforward computation, its spectra is given by
	$$
		Spec(\G(\tfrac{r+1}{2^{b}},r^{2})) = \{[2^{b}(r-1)]^{1}, [r-2^{b}]^{2^{b}(r-1)}, [-2^{b}]^{r^{2}-1-2^{b}(r-1)}\}.
	$$
Therefore, the last assertion of item ($f$) is a consequence of the fact that the eigenvalues of a Cartesian graph product are the sum of the eigenvalues of its factors.
\end{proof}

Now, using the spectrum of the GP-graphs just computed we can give the weight enumerators of the SBR-codes. 

\begin{thm} \label{thm: WC(x) para SBR}
Let $p$ be an odd prime, $n=2^{\ell}$, $r=p^{s}$ with $\ell, s \in \N$ and take $q=r^{h}$ with $h=ord_{n}(r)$. 
Let $r=2^{b}c\pm 1$ with $b, c \in \N$ such that $b\ge 2$ and $c$ odd. 
Consider the $r$-ary irreducible cyclic code 
	$$ \CC = \CC_{\ell,r} = \CC(\tfrac{q-1}{n},q/r). $$ 
Then, we have the following:
\begin{enumerate}[$(a)$]
	\item If $r\equiv 1\pmod{4}$ and $\ell \le b$, then $\CC$ is $[2^{\ell}, 1, 2^{\ell}]_r$-code with weight enumerator
	$$ W_{\CC}(x)=1+(r-1)x^{2^{\ell}}. $$ 
	
	\item If $r\equiv 1\pmod{4}$ and $\ell > b$, then $\CC$ is a $[2^{\ell},2^{\ell-b}, 2^b]_r$-code with weight enumerator
	$$
		W_{\CC}(x) = \big(1+(r-1)x^{2^b}\big)^{2^{\ell-b}}.
	$$

	\item If $r\equiv 3\pmod{4}$ with $\ell =b+1$ and $c=1$, then $\CC$ is a $[2^{\ell},2, 2^{b+1}-2]_r$-code with weight enumerator 
	$$
		W_{\CC}(x) = 1 + (r^2-1) x^{2^{b+1}-2}.
	$$
	
	\item If $r\equiv 3\pmod{4}$ with $\ell < b+1$ or else with 
	$\ell = b+1$ and $c>1$, then $\CC$ is a $[2^{\ell},2, 2^{\ell}-2]_r$-code with weight enumerator 
	$$
		W_{\CC}(x) = 1 + 2^{\ell-1}(r-1) x^{2^{\ell}-2}+ \big(r^{2}-1-2^{\ell-1}(r-1) \big)x^{2^{\ell}}.
	$$
	
	\item If $r\equiv 3\pmod{4}$ with $\ell > b+1$ and $c=1$, then $\CC$ is a $[2^{\ell},2^{\ell-b},2^{b+1}-2]_r$-code having weight enumerator
	$$
		W_{\CC}(x) = \big( 1+ (r^2-1) x^{2^{b+1}-2} \big)^{2^{\ell-b-1}}.
	$$
	
	\item If $r\equiv 3\pmod{4}$ with $\ell > b+1$ and $c>1$, then $\CC$ is a $[2^{\ell},2^{\ell-b},2^{b+1}-2]_r$-code having weight enumerator
	$$
	W_{\CC}(x) = \big( 1 + 2^b(r-1) x^{2^{b+1}-2} + (r^{2}-1-2^b(r-1)) x^{2^{b+1}} \big)^{2^{\ell-b-1}}.
	$$
\end{enumerate}
In particular, all non-trivial weights have frequencies that are multiples of $(r-1)$.
\end{thm}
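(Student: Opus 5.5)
The plan is to translate the spectra of Proposition~\ref{prop: SpecGP SBR} into weights through the spectral relation of Theorem~\ref{teo: rel Ckq Gkq}. Writing $k=\frac{q-1}{2^\ell}$ and $k_r=k_{\ell,r}$, relation \eqref{Pesoaut} inverts to
$$ w(c_\gamma) = \frac{(r-1)\,k_r}{k\,r}\Big(\frac{q-1}{k_r}-\lambda_\gamma\Big) = \frac{r-1}{kr}\big(q-1-k_r\lambda_\gamma\big), $$
and each weight inherits the multiplicity of its eigenvalue as its frequency. The dimension is $h=ord_n(r)$, read off from \eqref{eq: ord SBR}. The minimum distance is the smallest nonzero weight produced, which always comes from the second largest eigenvalue.

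I will go case by case, substituting $q$ and $k_r$ from Lemma~\ref{lem: kr SBR}.
\begin{itemize}
\item In case $(a)$, $q=r$ and $k_r=1$, so $k=(r-1)/2^\ell$. The eigenvalue $-1$ gives $w=\frac{r-1}{kr}\cdot r=2^\ell$, with frequency $r-1$.
\item In case $(b)$, $q=r^{2^{\ell-b}}$ and $k_r=\frac{q-1}{2^{\ell-b}(r-1)}$, so $\frac{q-1}{k_r}=2^{\ell-b}(r-1)$. The Hamming eigenvalue $2^{\ell-b}(r-1)-jr$ then gives $w=\frac{(r-1)k_r}{kr}\,jr$. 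Since $\frac{k_r}{k}=\frac{2^\ell}{2^{\ell-b}(r-1)}$, this simplifies to $w=2^b j$. The frequency is $\binom{2^{\ell-b}}{j}(r-1)^j$, and summing over $j$ is exactly the binomial expansion of $(1+(r-1)x^{2^b})^{2^{\ell-b}}$.
\item Cases $(c)$ and $(e)$ run the same way with $r^2$ in place of $r$, where $k_r/k$ is computed from $k_r=\frac{r+1}{2^b}$ (respectively its Cartesian-power version).
\item In case $(d)$, $q=r^2$, $k_r=\frac{r+1}{2^{\ell-1}}$ and $k=\frac{r^2-1}{2^\ell}$, so $\frac{(r-1)k_r}{kr}=\frac{2}{r}$. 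The eigenvalues $r-2^{\ell-1}$ and $-2^{\ell-1}$ give weights $\frac{2}{r}(2^{\ell-1}r-r)=2^\ell-2$ and $\frac2r\cdot 2^{\ell-1}r=2^\ell$, with frequencies as listed in Proposition~\ref{prop: SpecGP SBR}.
\end{itemize}

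The main obstacle is case $(f)$, where the multiplicities are sums over coinciding Cartesian eigenvalues $\Lambda_{\ell_1,\ell_2,\ell_3}$. I will avoid unwinding them by noting that the weight map is affine in $\lambda$ and that $\frac{q-1}{k_r}$ equals $2^{\ell-b-1}$ times the top eigenvalue $2^b(r-1)$ of the factor. Hence $w(\Lambda_{\ell_1,\ell_2,\ell_3})=C\big(\ell_2(2^b(r-1)-(r-2^b))+\ell_3\,2^b r\big)$ for a single constant $C=\frac{(r-1)k_r}{kr}$.

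Computing $k=\frac{r^{2^{t}}-1}{2^\ell}$ and $k_r=\frac{r^{2^t}-1}{2^{\ell-1}(r-1)}$ gives $C=\frac{2}{r}$. The weight is then $\ell_2(2^{b+1}-2)+\ell_3 2^{b+1}$, which depends only on $(\ell_2,\ell_3)$. Therefore the generating function $\sum \mu_{\ell_1,\ell_2,\ell_3}x^{w}$ is the multinomial expansion of $(m_1+m_2x^{2^{b+1}-2}+m_3x^{2^{b+1}})^{2^{\ell-b-1}}$. This handles coinciding eigenvalues automatically, since frequencies add exactly as the multiplicities do. Case $(e)$ is the same argument with $m_2=0$, or directly from the Hamming spectrum.

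The closing remark, that every nonzero frequency is a multiple of $r-1$, holds because scalar multiples $\lambda c$ with $\lambda\in\ff_r^*$ share the same weight.
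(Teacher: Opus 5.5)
Your proposal is correct and follows the paper's proof: you invert \eqref{Pesoaut}, feed in the spectra of Proposition~\ref{prop: SpecGP SBR} case by case, and in $(f)$ use that the weight is affine in $\Lambda_{\ell_1,\ell_2,\ell_3}$ with $\frac{q-1}{k_r}=2^{\ell-b-1}\cdot 2^b(r-1)$ (the paper's $n=2^{b+1}(\ell_1+\ell_2+\ell_3)$), so the multinomial theorem absorbs coinciding eigenvalues. One wording point: in $(c)$ and $(e)$ only the graph spectrum involves $r^2$, while the conversion factor stays $\frac{2}{r}$, which is what gives weight $2r=2^{b+1}-2$. The one real deviation is the final assertion, which you derive from the free action of $\ff_r^*$ on nonzero codewords (valid for any $r$-ary linear code); the paper instead reads it off the explicit enumerators.
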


\begin{proof}
Items ($a$)--($e$) follow directly from Theorem \ref{teo: rel Ckq Gkq} and Proposition \ref{prop: SpecGP SBR}.

Finally, to see item $(f)$, recall that by Theorem \ref{teo: rel Ckq Gkq} the weights of $\CC$ are given by
	$$
		 \tfrac{r-1}{r} \big( n-\tfrac{k_{\ell,r}}{k} \lambda \big)
	$$
with $\lambda$ an eigenvalue of $\G(k_{\ell,r},q)$.
In the notation of the spectrum of $\G(k_{\ell,r},q)$, notice that if $\ell_1+\ell_2+\ell_3 = 2^{\ell-b-1}$ then 
	$$
		n=2^{\ell}=2^{b+1}(\ell_1+\ell_2+\ell_3) \qquad \text{and} \qquad \tfrac{k_{\ell,r}}{k} = \tfrac{2}{r-1}.
	$$
Therefore, from ($f$) of Proposition \ref{prop: SpecGP SBR} and Theorem \ref{teo: rel Ckq Gkq} the weights of $\CC$ are given by
	$$
		\ell_1(\tfrac{r-1}{r})(2^{b+1}-\tfrac{2^{b+1}(r-1)}{r-1})+\ell_2(\tfrac{r-1}{r})(2^{b+1}-\tfrac{2(r-2^{b})}{r-1})+\ell_3(\tfrac{r-1}{r})(2^{b+1}+\tfrac{2^{b+1}}{r-1})
	$$
which simplifies to
	$$
		\ell_1\cdot 0+\ell_2 \cdot (2^{b+1}-2)+\ell_3\cdot 2^{b+1}
	$$
with frequencies
	$$
		\mu_{\ell_1,\ell_2,\ell_3} = \tbinom{2^{\ell-b-1}}{\ell_1,\ell_2,\ell_3}m_1^{\ell_1} m_2^{\ell_2} m_3^{\ell_3} 
	$$
where $m_1=1$, $m_2=2^{b}(r-1)$, $m_3=r^{2}-1-2^{b}(r-1)$ and $\binom{2^{\ell-b-1}}{\ell_1,\ell_2,\ell_3}=\frac{2^{\ell-b-1}!}{\ell_1 ! \ell_2 ! \ell_3 !}$. 
Therefore, the weight enumerator of $\CC$ is given by
	$$
		W_{\CC}(x) = \sum \tbinom{2^{\ell-b-1}}{\ell_1,\ell_2,\ell_3}m_1^{\ell_1} m_2^{\ell_2} m_3^{\ell_3}x^{\ell_1 \cdot0} x^{\ell_2 (2^{b+1}-2)} x^{\ell_3 2^{b+1}}
	$$ 
where the sum is taken of the $3$-tuples of non-negative integers $(\ell_1,\ell_2,\ell_3)$ such that $\ell_1+\ell_2+\ell_3=2^{\ell-b-1}$. 
By the binomial theorem we have that 
\begin{align*}
	W_{\CC}(x)  & = \big( m_1 x^{0}+m_2 x^{2^{b+1}-2} + m_3 x^{2^{b+1}}\big)^{2^{\ell-b-1}} \\ 
	& = \big( 1+2^{b}(r-1)x^{2^{b+1}-2}+ (r^{2}-1-2^{b}(r-1))x^{2^{b+1}}\big)^{2^{\ell-b-1}},
\end{align*}
as desired.

Notice that the length of all of the codes is $n=2^{\ell}$ and its dimension is $ord_{n}(r)$ which is explicitly given by \eqref{eq: ord SBR} in each case. 
Finally the minimal weight in each case can be derived directly from the weight enumerators.
The last assertion is clear from the expressions for the weight enumerators (this is implicit in Theorems 1--3 in \cite{SBR}).
\end{proof}

Note that if $\ell=1$, we are in case ($a$), and $\CC$ is the repetition code $Rep_{r}(2)$ with parameters $[2,1,2]$ and weight enumerator $W_\CC(x)=1+(r-1)x^2$.

\begin{rem}
The weight distribution of these codes were previously obtained in Theorems 1, 2 and 3 of \cite{SBR}. 
However, in the case of $r=2^bc-1$, the weight enumerators of ICCs of length $2^\ell$ with $\ell \ge 3$ are not explicitly given in a closed form (see Theorem~3 in \cite{SBR}). 
We point out that we have obtained closed expressions for $W_\CC(x)$ in all the cases. 
Also, it can be checked that the parameters of the codes given in the above result coincide with the parameters given in the final corollary in \cite{SBR}.
\end{rem}

\section{Reductions for the weight enumerators} \label{sec: reduction}
In this section, we show that each family of $r$-ary SBR-codes of length $2^\ell$ possesses a threshold minimal code $\mathcal{C}_{\ell_0}$ of length $2^{\ell_0}$. Remarkably, all subsequent codes in the family corresponding to lengths $2^\ell$ (with $\ell > \ell_0$) have weight enumerators that are exact powers of the threshold weight enumerator. 

Define the threshold parameter $\ell_0$ by
\begin{equation} \label{eq: threshold}
\ell_0 =  \begin{cases} \hfil b & \qquad \text{if } r \equiv 1 \pmod 4, \\[1mm] b+1 & \qquad \text{if } r \equiv 3 \pmod 4. \end{cases}
\end{equation}	

The reduction result for the weight enumerators is as follows.

\begin{thm} \label{thm: SBR reduction}
Let $p$ be an odd prime, $r=p^s$, and let $\CC_\ell$ denote the $r$-ary irreducible cyclic code of length $2^\ell$ for $\ell \ge 2$. 
Let $b \ge 2$ and $c$ odd be the unique integers such that $r = 2^b c \pm 1$. 
Then, the base threshold code $\CC_{\ell_0}$ has dimension $1$ if $r \equiv 1 \pmod 4$, and dimension $2$ if $r \equiv 3 \pmod 4$. 
Furthermore, for all $\ell \ge \ell_0$, the weight enumerator of $\CC_\ell$ is completely determined by the base code at the threshold $\ell_0$. 
Specifically,
\begin{equation} \label{eq: W(x) threshold}
	W_{\CC_\ell}(x) = \big( W_{\CC_{\ell_0}}(x) \big)^{2^{\ell-\ell_0}}.
\end{equation}
\end{thm}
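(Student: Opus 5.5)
The plan is to read everything off Theorem~\ref{thm: WC(x) para SBR} together with the dimension formula \eqref{eq: ord SBR}, splitting according to $r \bmod 4$.

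For the dimension claim, recall that $\dim \CC_\ell = ord_{2^\ell}(r)$. If $r\equiv 1 \pmod 4$, then $\ell_0=b\ge 2$, and \eqref{eq: ord SBR} in the range $2\le \ell\le b$ gives $ord_{2^b}(r)=1$. This is also direct: $2^b \mid r-1$. If $r\equiv 3\pmod 4$, then $\ell_0=b+1$, and \eqref{eq: ord SBR} gives $ord_{2^{b+1}}(r)=2$. Directly, $r\not\equiv 1 \pmod{2^{b+1}}$ since $r-1\equiv 2 \pmod 4$, while $r^2-1=(r-1)(r+1)$ is divisible by $2\cdot 2^b$.

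For the enumerator identity with $r\equiv 1\pmod 4$, I apply Theorem~\ref{thm: WC(x) para SBR}. Item $(a)$ with $\ell=b$ gives $W_{\CC_{b}}(x)=1+(r-1)x^{2^b}$. Item $(b)$ gives $W_{\CC_\ell}(x)=(1+(r-1)x^{2^b})^{2^{\ell-b}}$ for $\ell>b$. Hence \eqref{eq: W(x) threshold} holds for $\ell>b$, and it holds trivially for $\ell=\ell_0$.

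For $r\equiv 3\pmod 4$, I split on $c$.

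If $c=1$, item $(c)$ gives $W_{\CC_{b+1}}(x)=1+(r^2-1)x^{2^{b+1}-2}$. Item $(e)$ gives exactly the $2^{\ell-b-1}$-th power of this for $\ell>b+1$.

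If $c>1$, item $(d)$ with $\ell=b+1$ gives
\[
W_{\CC_{b+1}}(x)=1+2^{b}(r-1)x^{2^{b+1}-2}+\big(r^2-1-2^b(r-1)\big)x^{2^{b+1}},
\]
since $2^{\ell-1}=2^b$ and $2^\ell=2^{b+1}$. Item $(f)$ gives the $2^{\ell-b-1}$-th power of this polynomial for $\ell>b+1$.

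In each case the exponent is $2^{\ell-\ell_0}$, as required.

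The main point needing care is checking that the threshold polynomial matches the base of the power in each case. For $c>1$ this means substituting $\ell=b+1$ into $(d)$ and confirming that the coefficients become those in $(f)$. Conceptually, this reflects Corollary~\ref{coro: the 4 GP-graphs}: for $\ell>\ell_0$, the associated graph is a Cartesian product of $2^{\ell-\ell_0}$ copies of the threshold graph. Since the eigenvalues of a product add, the multinomial computation in the proof of Theorem~\ref{thm: WC(x) para SBR}$(f)$ turns into a power of the threshold enumerator.
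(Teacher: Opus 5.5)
Your proposal is correct and follows the paper's route. You compute the threshold dimension directly, from $2^b\mid r-1$, respectively from $r\not\equiv 1$ and $r^2\equiv 1 \pmod{2^{b+1}}$, and then read both sides of \eqref{eq: W(x) threshold} off items $(a)$--$(f)$ of Theorem~\ref{thm: WC(x) para SBR}. The only cosmetic difference is that for $c=1$ the paper writes the threshold coefficient as $2^b(r-1)$, which equals your $r^2-1$ because $r+1=2^b$ in that case.
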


\begin{proof}
We proceed by analyzing the parity of $r$ modulo $4$.

$\bullet$ 
Assume first that $r \equiv 1 \pmod 4$. By definition, $r = 2^b c + 1$ and the threshold is $\ell_0 = b$. 
The dimension of $\CC_b$ is $h = \operatorname{ord}_{2^b}(r)$. Since $r \equiv 1 \pmod{2^b}$, we have $h=1$.

From Theorem \ref{thm: WC(x) para SBR} $(a)$, the weight enumerator of the base code at the threshold $\ell = b$ is given by
$$W_{\CC_b}(x) = 1 + (r-1)x^{2^b}.$$
For any $\ell > b$, Theorem \ref{thm: WC(x) para SBR} $(b)$ establishes that the weight enumerator of $\CC_\ell$ is
$$W_{\CC_\ell}(x) = \big( 1 + (r-1)x^{2^b} \big)^{2^{\ell-b}}.$$
Substituting $W_{\CC_b}(x)$ into this equation directly yields $W_{\CC_\ell}(x) = (W_{\CC_b}(x))^{2^{\ell-b}}$, satisfying the claim.

$\bullet$
Assume now that $r \equiv 3 \pmod 4$. In this case, $r = 2^b c - 1$ and the threshold is $\ell_0 = b+1$. 
The dimension of $\CC_{b+1}$ is $h = \operatorname{ord}_{2^{b+1}}(r)$. On the one hand we have $r \not\equiv 1 \pmod{2^{b+1}}$ but on the other hand it holds $r^2 = 2^{2b}c^2 - 2^{b+1}c + 1 \equiv 1 \pmod{2^{b+1}}$, since $b \ge 2$, then we have $h=2$.

From Theorem \ref{thm: WC(x) para SBR} $(c)$ and $(d)$, the weight enumerator at $\ell = b+1$ is 
	$$ W_{\CC_{b+1}}(x) = 
		\begin{cases}
		1 + 2^b(r-1) x^{2^{b+1}-2} & \text{if } c=1, \\[1mm]
		1 + 2^b(r-1) x^{2^{b+1}-2} + \big( r^2-1-2^b(r-1) \big)x^{2^{b+1}} & \text{if } c>1. 
		\end{cases}
	$$
For any $\ell > b+1$, Theorem \ref{thm: WC(x) para SBR} $(e)$ and $(f)$ states that
	$$ W_{\CC_\ell}(x) = 
		\begin{cases}
		\big( 1 + 2^b(r-1) x^{2^{b+1}-2}  \big)^{2^{\ell-(b+1)}} & \text{if } c=1, \\[1mm]
		\big( 1 + 2^b(r-1) x^{2^{b+1}-2} + (r^2-1-2^b(r-1)) x^{2^{b+1}} \big)^{2^{\ell-(b+1)}} & \text{if } c>1. 
		\end{cases}
	$$
Replacing the base polynomial with $W_{\CC_{b+1}}(x)$ completes the proof.
\end{proof}

The reduction formula in Theorem \ref{thm: SBR reduction} reveals striking structural consequences for the SBR-codes beyond the threshold $\ell_0$, completely determining their minimum distances and their number of distinct non-zero weights.
	
\begin{rem}
	\label{rem: minimum distance}
As the code length $2^\ell$ grows, the minimum distance freezes exactly at the threshold value. For Type I codes, the minimum distance remains strictly $d = 2^b$ for all $\ell \ge b$. Similarly, for Type II codes, the minimum distance remains fixed at $d = 2^{b+1}-2 = 2(2^b-1)$ for all $\ell \ge b+1$.
\end{rem}
	
We now study the number of weights.
\begin{coro} \label{coro: t-weights}
	Let $\ell > \ell_0$ with $\ell_0$ as in \eqref{eq: threshold}. In the previous notations we have: 
		
	\noindent $(a)$ Type I codes are $t$-weight codes with $t = 2^{\ell-b}$. 
		
	\noindent $(b)$ Type II codes are $t$-weight codes with $t \le \frac{1}{2}m(m+3)$ where $m = 2^{\ell-b-1}$. 
	Furthermore, if $c=1$, then $t=2^{\ell-b-1}$; and if $c>1$ and $b+2\leq\ell\leq2b$, then $t=2^{\ell-b-2}\bigl(2^{\ell-b-1}+3\bigr)$.
\end{coro}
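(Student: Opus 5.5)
The plan is to read off the number of distinct nonzero weights directly from the closed forms in Theorem~\ref{thm: WC(x) para SBR}, items $(b)$, $(e)$ and $(f)$. Since $\ell>\ell_0$, we are in exactly those cases. For each one I will expand the power by the binomial or multinomial theorem. I will then (i) list the exponents that occur, (ii) check that every coefficient is strictly positive, so that no exponent drops out, and (iii) count how many of the nonzero exponents are distinct.

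For $(a)$, put $M=2^{\ell-b}$. Then
$$\big(1+(r-1)x^{2^b}\big)^{M}=\sum_{j=0}^{M}\tbinom{M}{j}(r-1)^jx^{j2^b}.$$
The exponents $j2^b$ are pairwise distinct and all coefficients are positive. Hence the nonzero weights are $j2^b$ for $1\le j\le M$, and $t=2^{\ell-b}$. The type II case with $c=1$ is identical: by item $(e)$ the weights are $j(2^{b+1}-2)$ for $1\le j\le m=2^{\ell-b-1}$, so $t=m$.

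For type II with $c>1$ (item $(f)$), write $m=2^{\ell-b-1}$. By the multinomial expansion already used in the proof of Theorem~\ref{thm: WC(x) para SBR}, the weights are
$$w(\ell_2,\ell_3)=\ell_2(2^{b+1}-2)+\ell_3 2^{b+1}=2^{b+1}s-2\ell_2,\qquad s=\ell_2+\ell_3,\quad 0\le \ell_2\le s\le m.$$
Each weight carries coefficient $\mu_{\ell_1,\ell_2,\ell_3}$. This coefficient is positive because $m_2=2^b(r-1)>0$ and
$$m_3=r^2-1-2^b(r-1)=(r-1)(r+1-2^b)=(r-1)2^b(c-1)>0$$
when $c>1$. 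The number of pairs $(\ell_2,\ell_3)\neq(0,0)$ with $\ell_2+\ell_3\le m$ is $\binom{m+2}{2}-1=\frac12 m(m+3)$, which gives the upper bound on $t$.

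The main point is deciding when these pairs give distinct weights. Suppose $w(\ell_2,\ell_3)=w(\ell_2',\ell_3')$. Then $2^{b}(s-s')=\ell_2-\ell_2'$, while $|\ell_2-\ell_2'|\le m$. If $m<2^b$, this forces $s=s'$ and then $\ell_2=\ell_2'$, so all weights are distinct. The condition $m<2^b$ means $\ell-b-1<b$, i.e.\ $\ell\le 2b$. Together with $\ell\ge b+2$ this gives
$$t=\tfrac12 m(m+3)=2^{\ell-b-2}\big(2^{\ell-b-1}+3\big).$$
For $\ell>2b$ collisions do occur, for instance $(\ell_2,s)=(2^b,1+2^b)$ versus $(0,1)$. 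This is why only the inequality is asserted in general.
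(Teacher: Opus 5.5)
Your proof is correct and is essentially the paper's argument. You expand the powers from Theorem~\ref{thm: WC(x) para SBR} $(b)$, $(e)$, $(f)$, use $m_2,m_3>0$, bound $t$ by $\binom{m+2}{2}-1$, and rule out coinciding exponents when $m<2^b$; your relation $2^b(s-s')=\ell_2-\ell_2'$ does the job of the paper's $\gcd(2^b-1,2^b)=1$ step. Only your closing illustration is miscomputed: $(\ell_2,s)=(2^b,2^b+1)$ and $(0,1)$ give weights $2^{2b+1}\neq 2^{b+1}$, whereas a genuine collision is $(\ell_2,s)=(2^b,2^b)$ versus $(0,2^b-1)$, both of weight $2^{2b+1}-2^{b+1}$ (this is $24$ in Example~\ref{exam: collision}); that remark is not needed for the statement.
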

	
\begin{proof}
$(a)$ For Type I codes with $\ell > b$, the weight enumerator is given by $(1+(r-1)x^{2^b})^{2^{\ell-b}}$. This binomial power expands into exactly $2^{\ell-b}+1$ terms, yielding $t = 2^{\ell-b}$ non-zero weights.

$(b)$ For a Type II code with $\ell > b+1$, notice that if $c=1$ then the weight enumerator is given by $(1+(r^2-1)x^{2^{b+1}-2})^{2^{\ell-b-1}}$, 
the same reasoning as in item ($a$) allows us to conclude that $t=2^{\ell-b-1}$. 
In particular, the bound $t \le \frac{1}{2}m(m+3)$ holds immediately.

On the other hand, if $c>1$ the weight enumerator is the $m$-th power of the trinomial 
	$$ T(x) = 1 + A x^{2^{b+1}-2} + B x^{2^{b+1}} $$ 
established in Theorem \ref{thm: WC(x) para SBR} $(f)$. 
Since $c>1$, the coefficients are positive:
$$
A=2^b(r-1) >0 \qquad\text{and}\qquad B=2^b(c-1)(r-1)>0.
$$
Thus, the distinct weights occurring in
$T(x)^m$ are precisely the different integers
$$
w_{i,j}=(2^{b+1}-2)i+2^{b+1}j,
\qquad i,j\geq 0,\quad i+j\leq m.
$$
There are $\binom{m+2}{2}$ admissible pairs $(i,j)$.
Since only $(0,0)$ gives the zero weight, it follows that
$$
t\leq \tbinom{m+2}{2}-1=\tfrac{m(m+3)}{2},
$$
regardless of possible coincidences among the exponents.

Now, suppose that $w_{i,j}=w_{i',j'}$. Dividing by $2$, we obtain
\[
(2^b-1)(i-i')+2^b(j-j')=0.
\]
Since $\gcd(2^b-1,2^b)=1$, there exists $k\in\mathbb Z$ such that
$i-i'=2^bk$ and 
$j-j'=-(2^b-1)k$.
If $(i,j)\neq(i',j')$, then $k\neq 0$, and hence
	$$
		m \ge \max\{i,i'\}\geq |i-i'|\geq 2^b.
	$$
Consequently, when $m<2^b$, all the exponents in the multinomial
expansion are distinct. Since there are $\binom{m+2}{2}$ admissible
pairs $(i,j)$, and only $(0,0)$ gives the zero weight, we obtain
	$$
		t=\tbinom{m+2}{2}-1=\tfrac{m(m+3)}{2}.
	$$
Finally, as $m=2^{\ell-b-1}$, the condition $m<2^b$ is equivalent
to $\ell\leq 2b$. Therefore, for $b+2\leq\ell\leq 2b$,
	$$
		t=2^{\ell-b-2}\bigl(2^{\ell-b-1}+3\bigr),
	$$
as claimed.
\end{proof}

We now give an example to illustrate the arithmetic collision boundary in the corollary.	
\begin{exam} \label{exam: collision}
Consider the SBR-code with $r=27$ and $\ell=5$. 
Here $r = 2^2 \cdot 7 - 1$, so $b=2$. Since $\ell = 5 > 4 = 2b$, the condition for weight uniqueness is violated. 
The reduction exponent is $m = 2^{5-2-1} = 4$. By Corollary \ref{coro: t-weights} $(b)$, the theoretical upper bound for the number of non-zero weights is $t \le \frac{4 \cdot 7}2 = 14$. 
	
Let us compute the actual weights. The base weights for the threshold trinomial (at $\ell_0 = 3$) are $u = 2^{2+1}-2 = 6$ and $v = 2^{2+1} = 8$. 
When expanding the trinomial 
$$ 1 + 2^b(r-1) x^{2^{b+1}-2} + (r^{2}-1-2^b(r-1)) x^{2^{b+1}} = 1 + 104x^6 + 624x^8$$ 
to the power of $m=4$, the resulting weights are linear combinations of the form 
	$$6i + 8j,$$ 
where $S = i+j$ is the number of non-zero terms drawn from the trinomial, with $1 \le S \le 4$. 
Grouping the generated weights by $S$, we obtain:
	\begin{itemize}
		\item $S = 1$: $\{6, 8\}$ (2 weights), \sk 
		
		\item $S = 2$: $\{12, 14, 16\}$ (3 weights), \sk
		
		\item $S = 3$: $\{18, 20, 22, \mathbf{24}\}$ (4 weights), \sk
		
		\item $S = 4$: $\{\mathbf{24}, 26, 28, 30, 32\}$ (5 weights).
	\end{itemize}

Notice the arithmetic collision at weight $24$, which is generated in two algebraically distinct ways: $(x^8)^3$ and $(x^6)^4$.
Because this weight overlaps, the actual number of distinct non-zero weights drops to $t = 13$. This perfectly validates the threshold condition in the proof: since the gap between the base weights is $8-6=2$, the minimal overlap requires exactly $4$ blocks of $6$ to equal $3$ blocks of $8$, which can only happen when $m \ge 2^b = 4$.
\end{exam}

\section{Explicit computations} \label{sec: examples}
Here we illustrate, in three examples, the results in the previous sections for the $r$-ary ICCs of length $2^\ell$ in the smallest cases, that is for $p=3,5,7$ and $\ell=2,3,4,5$. 
We will give the weight enumerators and the associated GP-graphs in tables. In particular, we can see the threshold codes and the exact number of non-zero weights $t$ for each code, illustrating the growth described in Corollary \ref{coro: t-weights}.

\begin{exam} \label{exam: SBR p=3}
Let $p=3$. We analyze the irreducible $r$-ary cyclic codes of length $n=2^\ell$ for $\ell \in \{2, 3, 4, 5\}$ across the extensions $r=3^2=9$ and $r=3^3=27$.
	
\noindent \textbf{Case $r=9$:} Here $r \equiv 1 \pmod 4$, taking the form $r = 2^b c + 1$ with $b=3$ and $c=1$.
For $\ell=2 \le b$, the code $\CC(2, 9)$ has parameters $[4, 1, 4]_9$.  It is a $1$-weight code ($t=1$). The GP-graph is $K_9$, with spectrum $\{[8]^1, [-1]^8\}$. By Theorem \ref{thm: WC(x) para SBR} $(a)$, $W_\CC(x) = 1 + 8x^4$.
For $\ell=3 \le b$, the code $\CC(1, 9)$ has parameters $[8, 1, 8]_9$ and $t=1$. The GP-graph remains $K_9$. The weight enumerator is 			
	$$ W_\CC(x) = 1 + 8x^8 .$$
For $\ell=4 > b$, the code $\CC(5, 9^2)$ has parameters $[16, 2, 8]_9$ and evolves into a $2$-weight code ($t=2^{4-3}=2$). The graph transitions to $H(2,9)$, with spectrum $\{[16]^1, [7]^{16}, [-2]^{64}\}$. By Theorem \ref{thm: WC(x) para SBR}~$(b)$, 
	$$ W_\CC(x) = (1 + 8x^8)^2 = 1 + 16x^8 + 64x^{16}. $$
	For $\ell=5 > b$, the code $\CC(205, 9^4)$ has parameters $[32, 4, 8]_9$ and becomes a $4$-weight code ($t=2^{5-3}=4$). The graph is $H(4,9)$, with spectrum explicitly given by $\{[32]^1, [23]^{32}, [14]^{384}, [5]^{2048}, [-4]^{4096}\}$. The weight enumerator expands to 
	$$ W_\CC(x) = (1 + 8x^8)^4 =1+32x^{8} +384x^{16}+2048x^{24}+4096x^{32}.$$

\noindent \textbf{Case $r=27$:} Here $r \equiv 3 \pmod 4$, taking the form $r = 2^b c - 1$ with $b=2$ and $c=7$.
For $\ell=2 \le b+1$, the code $\CC(182, 27^2)$ has parameters $[4, 2, 2]_{27}$. 
It is a $2$-weight code ($t=2$). The GP-graph is the semiprimitive graph $\G(14, 729)$, with spectrum $\{[52]^1, [25]^{52}, [-2]^{676}\}$. 
By $(d)$ in Theorem~\ref{thm: WC(x) para SBR}, 
$$ W_\CC(x) = 1 + 52x^2 + 676x^4. $$
For $\ell=3 \le b+1$, the code $\CC(91, 27^2)$ has parameters $[8, 2, 6]_{27}$ and $t=2$. The GP-graph is the semiprimitive graph $\G(7, 729)$ with spectrum $\{[104]^1, [23]^{104}, [-4]^{624}\}$. The weight enumerator is 
$$ W_\CC(x) = 1 + 104x^6 + 624x^8. $$
For $\ell=4 > b+1$, the code $\CC(33215, 27^4)$ has parameters $[16, 4, 6]_{27}$. Here $m = 2^{4-2-1}=2$, yielding a $5$-weight code ($t=5$). The graph is the Cartesian product $\square^2 \G(7, 729)$. By Theorem \ref{thm: WC(x) para SBR} $(f)$, 
$$ W_\CC(x) = (1 + 104x^6 + 624x^8)^2 = 1 + 208x^6 + 1248x^8 + 10816x^{12} + 129792x^{14} + 389376x^{16}. $$
For $\ell=5 > b+1$, the code $\CC(\frac{27^8-1}{32}, 27^8)$ has parameters $[32, 8, 6]_{27}$. Here $m = 2^{5-2-1}=4$. While the upper bound from Corollary \ref{coro: t-weights} gives $t \le \frac{4 \cdot 7}2 = 14$, this code length crosses the collision threshold $\ell > 2b$. The graph is $\square^4 \G(7, 729)$, yielding the weight enumerator:
\begin{align*} 
	W_\CC(x) &= (1 + 104x^6 + 624x^8)^4 \\
	&= 1 + 416x^6 + 2496x^8 + 64896x^{12} + 778752x^{14} + 2336256x^{16} + 4499456x^{18} \\
	&\quad + 80990208x^{20} + 485941248x^{22} + 1088868352x^{24} + 2807660544x^{26} \\
	&\quad + 25268944896x^{28} + 101075779584x^{30} + 151613669376x^{32}. 
\end{align*}
Notice that the weight $24$ is generated by both $(x^8)^3$ and $(x^6)^4$. Due to this arithmetic collision, the sequence merges, producing exactly a $13$-weight code (see Example \ref{exam: collision}).

Table~1 summarizes the properties for these $3$-ary codes.
\begin{table}[H] 
	\centering
	{\small \renewcommand{\arraystretch}{1.2}
		\begin{tabular}{|c|c|c|c|c|c|c|c|}
			\hline
			$r$ & $b$ & $\ell$ & Code $\CC$ & Parameters & $t$ & Graph $\G$ & $W_\CC(x)$ \\
			\hline
			\multirow{4}{*}{$9$} & \multirow{4}{*}{$3$}
			& $2$ & $\CC(2, 9)$ & $[4, 1, 4]$ & $1$ & $K_9$ & $1 + 8x^4$ \\
			&& $3$ & $\CC(1, 9)$ & $[8, 1, 8]$ & $1$ & $K_9$ & $1 + 8x^8$ \\
			&& $4$ & $\CC(5, 9^2)$ & $[16, 2, 8]$ & $2$ & $H(2, 9)$ & $(1 + 8x^8)^2$ \\
			&& $5$ & $\CC(205, 9^4)$ & $[32, 4, 8]$ & $4$ & $H(4, 9)$ & $(1 + 8x^8)^4$ \\
			\hline
			\hline
			\multirow{4}{*}{$27$} & \multirow{4}{*}{$2$}
			& $2$ & $\CC(182, 27^2)$ & $[4, 2, 2]$ & $2$ & $\G(14, 729)$ & $1 + 52x^2 + 676x^4$ \\
			&& $3$ & $\CC(91, 27^2)$ & $[8, 2, 6]$ & $2$ & $\G(7, 729)$ & $1 + 104x^6 + 624x^8$ \\
			&& $4$ & $\CC(33215, 27^4)$ & $[16, 4, 6]$ & $5$ & $\square^2 \G(7, 729)$ & $(1 + 104x^6 + 624x^8)^2$ \\
			&& $5$ & $\CC(\frac{27^8-1}{32}, 27^8)$ & $[32, 8, 6]$ & $13$ & $\square^4 \G(7, 729)$ & $(1 + 104x^6 + 624x^8)^4$ \\
			\hline
		\end{tabular}
		\caption{Parameters, graphs, and weight enumerators for $p=3$.}}
	\label{tab: SBR p3}
\end{table}
\end{exam}

\begin{exam} \label{exam: SBR p=5}
	Let $p=5$. We analyze the codes of length $n=2^\ell$ for $\ell \in \{2, 3, 4, 5\}$ across $r=5^2=25$ and $r=5^3=125$.
	
	\noindent \textbf{Case $r=25$:} Here $r \equiv 1 \pmod 4$, with $b=3$ and $c=3$.
	For $\ell=2 \le b$, $\CC(6, 25)$ is a $[4, 1, 4]_{25}$-code. The graph is $K_{25}$. By Theorem \ref{thm: WC(x) para SBR} $(a)$, 	
	$$ W_\CC(x) = 1 + 24x^4. $$
	For $\ell=3 \le b$, $\CC(3, 25)$ is a $[8, 1, 8]_{25}$-code. The graph is $K_{25}$, giving 
	$$ W_\CC(x) = 1 + 24x^8. $$
	For $\ell=4 > b$, $\CC(39, 25^2)$ is a $[16, 2, 8]_{25}$-code ($t=2$). The graph is $H(2, 25)$. By Theorem \ref{thm: WC(x) para SBR} $(b)$, 
	$$ W_\CC(x) = (1 + 24x^8)^2. $$
	For $\ell=5 > b$, $\CC(12207, 25^4)$ is a $[32, 4, 8]_{25}$-code ($t=4$). The graph is $H(4, 25)$, yielding 
	$$ W_\CC(x) = (1 + 24x^8)^4. $$ 
	The spectrum is $\{[96]^1, [71]^{96}, [46]^{3456}, [21]^{55296}, [-4]^{331776}\}$.
	
	\noindent \textbf{Case $r=125$:} Here $r \equiv 1 \pmod 4$, with $b=2$ and $c=31$.
	For $\ell=2 \le b$, $\CC(31, 125)$ is a $[4, 1, 4]_{125}$-code. The graph is $K_{125}$. The weight enumerator is 
	$$ W_\CC(x) = 1 + 124x^4. $$
	For $\ell=3 > b$, $\CC(1953, 125^2)$ is a $[8, 2, 4]_{125}$-code ($t=2$). The graph becomes $H(2, 125)$, with spectrum $\{[248]^1, [123]^{248}, [-2]^{15376}\}$. The enumerator is 
	$$ W_\CC(x) = (1 + 124x^4)^2=1 + 248x^4 + 15376x^{8}. $$
	For $\ell=4 > b$, $\CC(15258789, 125^4)$ is a $[16, 4, 4]_{125}$-code ($t=4$). The graph is $H(4, 125)$, with spectrum $\{[496]^1, [371]^{496}, [246]^{92256}, [121]^{7626496}, [-4]^{236421376}\}$. 
	The weight enumerator is 
	$$ W_\CC(x) = (1 + 124x^4)^4. $$
	For $\ell=5 > b$, $\CC(\frac{125^8-1}{32}, 125^8)$ is a $[32, 8, 4]_{125}$-code ($t=8$). 
	The graph is $H(8, 125)$, with spectrum $\{[992]^1, [867]^{992}, [742]^{430528}, \dots, [-8]^{124^8}\}$. 
	The enumerator is 
	$$ W_\CC(x) = (1 + 124x^4)^8. $$
	
	Table~2 summarizes the properties for these $5$-ary codes.
	\begin{table}[H] 
		\centering
		{\small	\renewcommand{\arraystretch}{1.2}
			\begin{tabular}{|c|c|c|c|c|c|c|c|}
				\hline
				$r$ & $b$ & $\ell$ & Code $\CC$ & Parameters & $t$ & Graph $\G$ & $W_\CC(x)$ \\
				\hline
				\multirow{4}{*}{$25$} & \multirow{4}{*}{$3$} 
				& $2$ & $\CC(6, 25)$ & $[4, 1, 4]$ & $1$ & $K_{25}$ & $1 + 24x^4$ \\
				&& $3$ & $\CC(3, 25)$ & $[8, 1, 8]$ & $1$ & $K_{25}$ & $1 + 24x^8$ \\
				&& $4$ & $\CC(39, 25^2)$ & $[16, 2, 8]$ &  $2$ & $H(2, 25)$ & $(1 + 24x^8)^2$ \\
				&& $5$ & $\CC(12207, 25^4)$ & $[32, 4, 8]$ & $4$ & $H(4, 25)$ & $(1 + 24x^8)^4$ \\
				\hline
				\hline
				\multirow{4}{*}{$125$} & \multirow{4}{*}{$2$}
				& $2$ & $\CC(31, 125)$ & $[4, 1, 4]$ & $1$ & $K_{125}$ & $1 + 124x^4$ \\
				&& $3$ & $\CC(1953, 125^2)$ & $[8, 2, 4]$ &  $2$ & $H(2, 125)$ & $(1 + 124x^4)^2$ \\
				&& $4$ & $\CC(15258789, 125^4)$ & $[16, 4, 4]$ &  $4$ & $H(4, 125)$ & $(1 + 124x^4)^4$ \\
				&& $5$ & $\CC(\frac{125^8-1}{32}, 125^8)$ & $[32, 8, 4]$ & $8$ & $H(8, 125)$ & $(1 + 124x^4)^8$ \\
				\hline
			\end{tabular}
			\caption{Parameters, graphs, and weight enumerators for $p=5$.}}
		\label{tab: SBR p5}
	\end{table}
\end{exam}

\begin{exam} \label{exam: SBR p=7}
	Let $p=7$. We explore the lengths $n=2^\ell$ for $\ell \in \{2, 3, 4, 5\}$ across $r=7^2=49$ and $r=7^3=343$.
	
	\noindent \textbf{Case $r=49$:} Here $r \equiv 1 \pmod 4$, with $b=4$ and $c=3$.
	For $\ell=2 \le b$, $\CC(12, 49)$ is a $[4, 1, 4]_{49}$-code. 
	The graph is $K_{49}$. By Theorem \ref{thm: WC(x) para SBR} $(a)$, $W_\CC(x) = 1 + 48x^4$.
	For $\ell=3 \le b$, $\CC(6, 49)$ is a $[8, 1, 8]_{49}$-code. 
	The graph is $K_{49}$, yielding 
	$$ W_\CC(x) = 1 + 48x^8. $$
	For $\ell=4 \le b$, $\CC(3, 49)$ is a $[16, 1, 16]_{49}$-code. 
	The graph is $K_{49}$, yielding 
	$$ W_\CC(x) = 1 + 48x^{16}. $$
	For $\ell=5 > b$, $\CC(75, 49^2)$ is a $[32, 2, 16]_{49}$-code  ($t=2$). 
	The graph becomes $H(2, 49)$. By Theorem~\ref{thm: WC(x) para SBR} $(b)$, 
	$$ W_\CC(x) = (1 + 48x^{16})^2. $$
	
	\noindent \textbf{Case $r=343$:} 
	Here $r \equiv 3 \pmod 4$, taking the form $r = 2^b c - 1$ with $b=3$ and $c=43$.
	For $\ell=2 \le b+1$, $\CC(29412, 343^2)$ is a $[4, 2, 2]_{343}$-code. The GP-graph is the semiprimitive graph $\G(172, 117649)$. By Theorem \ref{thm: WC(x) para SBR} $(d)$, 
	$$ W_\CC(x) = 1 + 684x^2 + 116964x^4. $$
	For $\ell=3 \le b+1$, $\CC(14706, 343^2)$ is a $[8, 2, 6]_{343}$-code. The GP-graph is the semiprimitive graph $\G(86, 117649)$. The weight enumerator is 
	$$ W_\CC(x) = 1 + 1368x^6 + 116280x^8. $$
	For $\ell=4 \le b+1$, $\CC(7353, 343^2)$ is a $[16, 2, 14]_{343}$-code. The GP-graph is the semiprimitive graph $\G(43, 117649)$. The weight enumerator is 
	$$ W_\CC(x) = 1 + 2736x^{14} + 114912x^{16}. $$
	For $\ell=5 > b+1$, $\CC(432540225, 343^4)$ is a $[32, 4, 14]_{343}$-code. Since $m = 2^{5-3-1}=2$, it is a $5$-weight code ($t=5$). The graph is the Cartesian product $\square^2 \G(43, 117649)$. By Theorem \ref{thm: WC(x) para SBR} $(f)$, 
	$$ W_\CC(x) = (1 + 2736x^{14} + 114912x^{16})^2. $$
	
	Table~3 summarizes the properties for these $7$-ary codes.
	\begin{table}[H] 
		\centering
		{\small	\renewcommand{\arraystretch}{1.2}
			\begin{tabular}{|c|c|c|c|c|c|c|c|}
				\hline
				$r$ & $b$ & $\ell$ & Code $\CC$ & Parameters &  $t$ & Graph $\G$ & $W_\CC(x)$ \\
				\hline
				\multirow{4}{*}{$49$} & \multirow{4}{*}{$4$}
				& $2$ & $\CC(12, 49)$ & $[4, 1, 4]$ & $1$ & $K_{49}$ & $1 + 48x^4$ \\
				&& $3$ & $\CC(6, 49)$ & $[8, 1, 8]$ &  $1$ & $K_{49}$ & $1 + 48x^8$ \\
				&& $4$ & $\CC(3, 49)$ & $[16, 1, 16]$ & $1$ & $K_{49}$ & $1 + 48x^{16}$ \\
				&& $5$ & $\CC(75, 49^2)$ & $[32, 2, 16]$ & $2$ & $H(2, 49)$ & $(1 + 48x^{16})^2$ \\
				\hline
				\hline
				\multirow{4}{*}{$343$} & \multirow{4}{*}{$3$}
				& $2$ & $\CC(29412, 343^2)$ & $[4, 2, 2]$ & $2$ & $\G(172, 343^2)$ & $1 + 684x^2 + 116964x^4$ \\
				&& $3$ & $\CC(14706, 343^2)$ & $[8, 2, 6]$ & $2$ & $\G(86, 343^2)$ & $1 + 1368x^6 + 116280x^8$ \\
				&& $4$ & $\CC(7353, 343^2)$ & $[16, 2, 14]$ & $2$ & $\G(43, 343^2)$ & $1 + 2736x^{14} + 114912x^{16}$ \\
				&& $5$ & $\CC(432540225, 343^4)$ & $[32, 4, 14]$ & $5$ & $\square^2 \G(43, 343^2)$ & $(1 + 2736x^{14} + 114912x^{16})^2$ \\
				\hline
			\end{tabular}
			\caption{Parameters, graphs, and weight enumerators for $p=7$.}}
		\label{tab: SBR p7}
	\end{table}
\end{exam}

\end{document}